\numberwithin{equation}{section}
\newtheorem{theorem}{Theorem}[section]
\newtheorem{lemma}{Lemma}[section]
\newtheorem{corollary}{Corollary}[section]
\newtheorem{proposition}{Proposition}[section]
\newtheorem{definition}{Definition}[section]
\newtheorem{remark}{Remark}[section]
\def\ndiff#1#2#3{{\frac{{\rm d}^{#1} #2}{{\rm d} #3^{#1}}}}
\def\diff#1#2{{\frac{{\rm d} #1}{{\rm d} #2}}}
\def\nparder#1#2#3{{\frac{\partial^{#1} #2}{\partial #3^{#1}}}}
\def\parder#1#2{{\frac{\partial #1}{\partial #2}}}
\def\Dscr{\mathcal{D}}
\def\cI{{\mathcal{I}}}
\def\cP{{\mathcal{P}}}
\def\cE{{\mathcal{E}}}
\def\cF{{\mathcal{F}}}
\def\cS{{\mathcal{S}}}
\def\cA{{\mathcal{A}}}
\def\cL{{\mathcal{L}}}
\def\cK{{\mathcal{K}}}
\def\cO{{\mathcal{B}}}
\def\cC{{\mathcal{C}}}
\def\cU{{\mathcal{U}}}
\def\cD{{\mathcal{D}}}
\def\cV{{\mathcal{V}}}
\def\cM{{\mathcal{M}}}
\def\cW{{\mathcal{W}}}
\def\bx{{\bf x}}
\def\bp{{\bf p}}
\def\bL{{\bf L}}
\def\Ve{V_{{\rm eff}}}
\def\Veff{V_{{\rm eff}}}
\def\te{{\tt E}}
\def\reali{\mathbb{R}}
\def\toro{\mathbb{T}}
\def\rimuovi#1{}
\def\hze{H}
\def\heff{H_{{\rm eff}}}
\def\cA{{\cal A}}
\def\cS{{\cal S}}
\def\R{\mathbb{R}}
\begin{document}
\title{Exponential stability in the perturbed 
  central force problem. }

\author{D. Bambusi\thanks{
    \noindent Dipartimento di Matematica ``Federigo Enriques'', Universit\`a degli Studi di Milano,
    Via Saldini 50, 20133 Milano.\newline%
    {\it Emails}: \vtop{\hbox{\tt dario.bambusi@unimi.it}\hbox{\tt alessandra.fuse@unimi.it}\hbox{\tt marco.sansottera@unimi.it}}}, A. Fus\`e\footnotemark[1], M. Sansottera\footnotemark[1]}

\date{\today}

\maketitle

\begin{abstract}
We consider the spatial central force problem with a real analytic
potential. We prove that for all analytic potentials, but the Keplerian and the
Harmonic ones, the Hamiltonian fulfills a nondegeneracy property
needed for the applicability of Nekhoroshev's theorem. We deduce
stability of the actions over exponentially long times when the system
is subject to arbitrary analytic perturbation. The case where the
central system is put in interaction with a slow system is also
studied and stability over exponentially long time is proved.
\end{abstract}

\section{Introduction and statement of the main result}\label{sec:intro}

It is well known that, if the Hamiltonian of an integrable system
fulfills some nondegeneracy properties, then Nekhoroshev's theorem
applies and yields strong stability properties of small perturbations
of the system \cite{Nekhoroshev-1977,Nekhoroshev-1979}. Precisely one
gets that the actions are approximately integrals of motions for times
exponentially long with the inverse of the size of the perturbation. A
sufficient nondegeneracy condition ensuring applicability of
Nekhoroshev's Theorem is quasiconvexity of the Hamiltonian in
action-angle coordinates. However, it is nontrivial to prove that in a
specific system such a condition is verified, and actually this is
known only for a few systems.

In this paper, continuing the investigation of \cite{RCD}, we study
the applicability of Nekhoroshev's theorem to perturbations of the
spatial central force problem. The spatial central motion is
a superintegrable system that does not admit global
action-angle coordinates and moreover its Hamiltonian depends on two
actions only. As a consequence it is never quasiconvex in the standard
sense. However, modern versions of Nekhoroshev's theorem
\cite{Fas95,Blaom} apply provided the Hamiltonian is quasiconvex as a
function of the two actions on which it actually depends. Here we
prove that for {any} analytic central potential (fulfilling
assumptions (H1-H3) below) except the Keplerian
and the Harmonic ones, this is the case. Therefore, one gets stability
over exponentially long times of these two actions.

To come to a formal statement consider the Hamiltonian of the spatial
central force problem in Cartesian coordinates
  \begin{equation}
    \label{carte}
    \hze (\bx,\bp)=\frac{\left|\bp\right|^2}{2}+V(|\bx|)\ ,
  \end{equation}
  where $\bx\equiv(x,y,z)$, $\bp\equiv(p_x,p_y,p_z)$,
  $\left|\bx\right|= \sqrt{x^2+y^2+z^2}$ and the potential
  $V\colon\reali_{>0}\to\reali$ is a {\it real analytic function}
  satisfying
  \begin{itemize}
  \item[(H1)] $\displaystyle{-\ell^*\coloneqq\lim_{r\to0^+}
    \frac{r^2V(r)}{2}>-\infty}${\rm\thinspace;}
  \item[(H2)] $\displaystyle{\exists r>0 \,:\, r^3V'(r)>\max\left\{\,
    0,\,\ell^*\,\right\}}${\rm\thinspace;}
  \item[(H3)] $\forall \ell>\max\left\{\, 0,\, \ell^*\,\right\}$,
    $r^3V'(r)=\ell$ has a finite number of solutions (in $r$)\/.
  \end{itemize}

\begin{remark}
\label{rem.H}
Assumption (H1) ensures that there exist values of the
  angular momentum s.t. the effective potential tends to plus
infinity as $r\to0^+$, while (H2) ensures that the effective potential
has at least one minimum. As a consequence the domain of the
action-angle variables is not empty.

These assumptions could be removed and substituted by the assumption
that the domain of the action-angle variables is not empty. Here we
decided to avoid such a vague assumption, substituting it with
some other one, easy to verify and guaranteeing such a condition.
\end{remark}
\begin{remark}
\label{H3}
 Assumption (H3) is a technical assumption needed in order to simplify
 the proof of Lemma~\ref{lemma4.1} giving the structure of the domain of action-angle variables. We expect that this assumption can be removed.
\end{remark}

We denote by $\bL:=\bx\times\dot \bx$ the total
angular momentum and by $L=|\bL|$ its modulus.
Then we consider the perturbed Hamiltonian system
  $$
  H_\epsilon=\hze +\epsilon P\ ,
  $$ were $P$ is analytic for $\bx\not=0$.  For its
  dynamics the following theorem holds.

\begin{theorem}
  \label{Nek.1}
 Assume that the potential $V$ fulfills (H1)--(H3) and that it is neither
 Harmonic nor Keplerian, then there exists a set $\cK\subset\R^6$,
 which is the union of finitely many analytic hypersurfaces such that:
 given a compact set $\cC\subset \R^6\setminus\cK$, invariant for the
 dynamics of $\hze $, then there exist positive constants
 $\epsilon^*$, $C_1$, $C_2$, $C_3$ and $C_4$ such that, for
 $|\epsilon|<\epsilon^*$ and initial data in $\cC$, along the dynamics
 of $H_\epsilon$, one has
   \begin{equation}
    \label{est.nek}
    \left|L(t)-L(0)\right|\leq C_1\epsilon^{1/4}\ ,\quad
    \left|\hze (t)-\hze (0)\right|\leq C_2\epsilon^{1/4} \ ,
  \end{equation}
  for 
  \begin{equation}
    \label{time}
    \left|t\right|\leq C_3\exp(C_4\epsilon^{-1/4})\ .
  \end{equation}
\end{theorem}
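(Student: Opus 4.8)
The plan is to push the problem down to the two actions on which $\hze$ genuinely depends, verify that $\hze$ is \emph{quasiconvex} as a function of those two actions, locate the analytic set where this fails, and then invoke the degenerate (superintegrable) version of Nekhoroshev's theorem of \cite{Fas95,Blaom}.

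\emph{Step 1 (reduction to two actions).} Away from collisions and from the locus $L=0$, conservation of the vector $\bL$ lets one reduce the $SO(3)$ symmetry and obtain a one–degree–of–freedom radial system with Hamiltonian $\tfrac{p_r^2}{2}+\Veff(r)$, $\Veff(r)=\tfrac{L^2}{2r^2}+V(r)$, in which $L$ enters as a parameter. For energies giving a bounded libration between turning points $r_-(E,L)<r_+(E,L)$ I introduce the radial action
\begin{equation}
  I(E,L)=\frac1\pi\int_{r_-}^{r_+}\sqrt{2\bigl(E-\Veff(r)\bigr)}\,{\rm d}r\,,
\end{equation}
so that, inverting, the energy becomes a function $\hze=h(I,L)$ of the two actions $(I,L)$, the remaining action (conjugate to the node) being cyclic. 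Hypotheses (H1)--(H2) guarantee that such bounded motions and the associated partial action–angle chart exist on an open set, while (H3) makes the set of circular orbits discrete. Since $h_I=\omega_r\neq0$ on this set, controlling $(I,L)$ is equivalent to controlling $(\hze,L)$, the two quantities in \eqref{est.nek}.

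\emph{Step 2 (nondegeneracy $=$ quasiconvexity).} The condition required by \cite{Fas95,Blaom} is that the Hessian of $h$ be definite on the line orthogonal to $\omega=(h_I,h_L)=(\omega_r,\omega_L)$. As that line is one–dimensional, this is the nonvanishing of the bordered Hessian
\begin{equation}
  \mathcal{Q}(I,L)\coloneqq h_{II}\,h_L^{2}-2h_{IL}\,h_I h_L+h_{LL}\,h_I^{2}\,.
\end{equation}
A direct computation gives $\mathcal{Q}=-h_I^{2}\,(\xi\!\cdot\!\nabla\rho)$, where $\rho\coloneqq\omega_L/\omega_r$ and $\xi=(h_L,-h_I)$ is tangent to the energy level curves; hence $\mathcal{Q}\neq0$ precisely when the frequency ratio $\rho=\Delta\phi/(2\pi)$, i.e. the angle $\Delta\phi$ swept by the radius during one radial period, is not stationary along the level sets of the energy.

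\emph{Step 3 (the exceptional set, main obstacle).} The crux is therefore to show that $\rho$ is generically non-stationary. Writing
\begin{equation}
  \Delta\phi=2\int_{r_-}^{r_+}\frac{L/r^{2}}{\sqrt{2(E-\Veff)}}\,{\rm d}r\,,\qquad
  T_r=2\int_{r_-}^{r_+}\frac{{\rm d}r}{\sqrt{2(E-\Veff)}}\,,
\end{equation}
one sees that $\mathcal{Q}$ is a real–analytic function on the reduced phase space. If $\mathcal{Q}$ vanished on an open set, then $\rho$ would be locally constant on energy levels, forcing a whole family of bounded orbits to share a common apsidal angle; by Bertrand's theorem this forces $V$ to be Keplerian or Harmonic, which are excluded. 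Thus $\mathcal{Q}$ is a nonzero real–analytic function, its zero set is a proper analytic subvariety, and lifting it to $\R^6$ together with the (analytic) loci of circular orbits, collisions and $L=0$ produces the set $\cK$; analyticity, together with the finiteness in (H3), makes $\cK$ a finite union of analytic hypersurfaces. Turning the local alternative ``$\mathcal{Q}\equiv0$ only for Kepler/Harmonic'' into this structural description of $\cK$ is the hardest part, and is where the detailed analysis of \cite{RCD} and a Bertrand-type argument are required.

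\emph{Step 4 (Nekhoroshev).} Fix a compact $\cC\subset\R^6\setminus\cK$, invariant for $\hze$. On $\cC$ the reduced Hamiltonian $h(I,L)$ is a genuinely quasiconvex integrable system in its two acting variables, and $H_\epsilon=\hze+\epsilon P$ is an analytic perturbation. The superintegrable version of Nekhoroshev's theorem of \cite{Fas95,Blaom} then applies and confines the two actions — equivalently $L$ and $\hze$ — over exponentially long times. Since only $n=2$ actions are effectively present, the quasiconvex exponents $a=b=1/(2n)=1/4$ yield exactly the estimates \eqref{est.nek} over the time scale \eqref{time}, completing the proof.
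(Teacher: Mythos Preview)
Your overall architecture (reduce to the two actions $(I_1,I_2)\!\equiv\!(I,L)$, identify quasiconvexity with nonvanishing of the bordered Hessian, then invoke the superintegrable Nekhoroshev theorem of \cite{Fas95,Blaom}) matches the paper's. The gap is in Step~3, and it is a real one.

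You correctly observe that $\mathcal{Q}=-h_I^{2}\,(\xi\cdot\nabla\rho)$ with $\xi=(h_L,-h_I)$ tangent to the level curves of $h$, so $\mathcal{Q}\equiv0$ on an open set means exactly that $\rho$ is constant along each energy level in the $(I,L)$-plane. But this is \emph{not} the hypothesis of Bertrand's theorem: Bertrand needs all bounded orbits in a neighbourhood to be closed, i.e.\ $\rho$ to be a (rational, hence locally constant) function on an \emph{open} set of actions, not merely a function of $h$ alone. The Burgers equation $\partial_{I_1}\nu=\nu\,\partial_{I_2}\nu$ (the form of $\mathcal{Q}=0$ used in the paper, with $\nu=1/\rho$) has many solutions with non-constant $\nu$; for instance $\nu=I_2/(1-I_1)$, coming from any $h=g\bigl(I_2/(1-I_1)\bigr)$. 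So from $\mathcal{Q}\equiv0$ you cannot deduce that ``a whole family of orbits shares a common apsidal angle'' in the sense Bertrand requires, and the appeal to Bertrand does not close the argument. In fact in the paper the implication runs the \emph{other} way: the analysis of $\mathcal{Q}$ is used to give a new proof of Bertrand, not vice versa.

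This is exactly why the paper's Step~3 is substantially harder than a one-line reduction. It first shows (Lemma~\ref{livelli}) that, off finitely many values of $I_2$, the critical points of $\Veff$ are nondegenerate with distinct critical levels, and then splits each domain $\mathcal{O}^{(2)}_{i,j}$ according to whether its infimum in energy at fixed $I_2$ is a nondegenerate \emph{minimum} or a nondegenerate \emph{maximum} of $\Veff$. Near a minimum the one-dimensional Birkhoff normal form converges, and imposing the Burgers equation at the first \emph{two} orders in $I_1$ yields two differential equations for $g(r_0)=r_0V''(r_0)/V'(r_0)$ (this goes beyond \cite{RCD}, which only produced the first one); a symbolic elimination shows their only common solutions are the constants $g=-2$ and $g=1$, i.e.\ Kepler and Harmonic. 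Near a maximum there is no power-series expansion in $I_1$ at all; instead a hyperbolic normal form gives $I_1=-\Lambda(\bar E,I_2)\ln\bar E+G_1(\bar E,I_2)$, from which one computes that $\mathcal{Q}$ \emph{diverges} like $(\bar E\ln^{3}\bar E)^{-1}$ as $\bar E\to0^{+}$, hence is not identically zero there. Your proposal neither distinguishes these two regimes nor supplies any mechanism for the ``maximum'' case, where a Bertrand-type argument is unavailable in principle.
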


\begin{remark}
\label{kappa}
The set $\cK$ is the set where the nondegeneracy assumptions are not
fulfilled.  In the statement of the theorem we consider a proper
subset $\cC$ in order to obtain values of the nondegeneracy constants
independent of the initial datum. A slightly different statement
avoiding $\cC$ is the following one.
\end{remark}

\begin{theorem}
\label{senzaC}
Assume that the potential is neither Harmonic nor Keplerian, then,
given an initial datum in $\R^6\setminus\cK$, there
exist constants $\epsilon^*$, $C_1$, $C_2$ , $C_3$ and $C_4$ such
that, for $|\epsilon|<\epsilon^*$, along the
dynamics of $H_\epsilon$, the inequalities \eqref{est.nek} hold for
the time \eqref{time}.
\end{theorem}

An immediate consequence of the above theorem is that, for an
exponentially long time, the particle's orbits are confined between
two spherical shells centered at the origin.  In addition, in the
spirit of~\cite{vincoli1,vincoli2,vincolilombardi}, in
Section~\ref{stat} we will also give a stability result for the
case where the central system is put in interaction with a ``slow''
system as in models of molecular dynamics \cite{BG93,teufel} (see
Theorem~\ref{nek.2}). We will also consider the application to the
dynamics of a soliton in NLS~\cite{BM16}.

Our result is strongly connected with Bertrand's theorem, according to
which, if all the bounded trajectories of a particle moving under the
action of a central force are periodic, then the force is either
Keplerian or Harmonic~\cite{Bertrand,FK04}. Actually some steps of our
procedure are reminiscent of the classical proof of Bertrand's
theorem, yet our analysis also yields a new proof of a version of
Bertrand's theorem stronger than the classical one (see
Appendix~\ref{Bert}).

We compare now our result to the result of \cite{RCD}. The present
result is stronger than that of \cite{RCD} in two aspects.  
First, in \cite{RCD} the authors proved that non-quasiconvex potentials
fulfill a suitable differential equation and that among the
homogeneous potentials only the Harmonic and the Keplerian ones
fulfill such an equation. Here, through a deeper analysis, we show
that {\it all} the analytic potential fulfilling assumption
(H3)\footnote{One should add ``and also (H1) (H2)'', but, as explained
  in Remark \ref{rem.H} these are not essential.} give rise to quasi
convex Hamiltonians except the the Harmonic and the Keplerian ones.

Secondly, the result of \cite{RCD} was only local and the existence of
a nondegenerate minimum of the effective potential was explicitly
assumed and only initial data belonging to a neighbourhood of such a
minimum were considered. On the contrary, we give here a result
allowing to deal with {\it any} initial datum giving rise to bounded
orbits. In particular, since according to our assumptions the
effective potential could have local maxima (as in Figure 1), the
domain of existence of action-angle variables can have also a
connected component not containing a minimum of the potential. The
study of this region requires a completely new analysis that is the
object of Subsect.~\ref{sec:max}. Making reference to Figure 1, the
result of \cite{RCD} only allows to deal with the regions strictly
below $V_2$, while here we also deal with the region between $V_2$ and
$V_{\infty}$.

\vskip10pt

We now outline the strategy adopted for proving our main result. First
we show that the result for the spatial central force problem follows
from standard quasiconvexity of the Hamiltonian of the {\it planar}
central motion.  We stress that even if the proof is based on the
study of the planar central motion, our main dynamical result, i.e.,
Theorem~\ref{Nek.1}, pertains the spatial case. Of course one can
deduce a similar result also for the planar case, but in this case a
much stronger result would follow from the application of KAM theory.
Indeed, since in systems with two degrees of freedom the KAM tori
separate the phase space, their existence implies stability of the
actions for all times.  Furthermore, the proof of Bertrand's theorem given
in~\cite{FK04} also yields applicability of KAM theorem to all the
analytic cases which are neither Harmonic nor Keplerian.

The planar central force problem has two degrees of freedom and the key
remark is that for systems with two degrees of freedom quasiconvexity
is equivalent to the non vanishing of the Arnold determinant.
Furthermore, in an analytic context the Arnold determinant is an
analytic function of the actions, therefore only two possibilities occur:
either it is a trivial analytic function, or it is always different
from zero except on an analytic hypersurface.  In Theorem~\ref{bert}
we will prove that for the Hamiltonian of the planar central motion
such a determinant is a nontrivial function, except in the
Keplerian and Harmonic cases.

The study of the Arnold determinant is done via its asymptotics at
circular orbits.  Precisely, we introduce polar coordinates and recall
that the two actions $(I_1,I_2)$ of the planar central motion are the
modulus of the angular momentum $p_{\theta}\equiv I_2$ and the action
of the effective system describing the motion of the radial variable
\begin{equation}
  \label{H}
\heff (r,p_r,p_{\theta})\coloneqq\frac{p_r^2}{2}+\Ve(r,p^2_\theta)\ ,
\end{equation}
where 
\begin{equation*}
\Ve(r,p_\theta^2)\coloneqq V(r)+\frac{p_\theta^2}{2r^2}\ ,
\end{equation*}
and $p_\theta$ plays the role of a parameter.

The critical points of the effective potential correspond to circular
orbits of the planar system and we show that, except for at most a
finite number of values of $p_\theta$ that we eliminate, such critical
points are nondegenerate.  This property is crucial in the subsequent
analysis.

The regions where the action $I_1$ is defined are open connected
regions of the space $(r,p_r)$, that we denote by $\cE_j$ ($j$ being 
an index ranging over a finite set), which are constructed as unions
of connected components of compact level surfaces of $\hze $.  We
distinguish two situations: (1)~the closure of $\cE_j$ contains a
minimum of the effective potential; (2)~the closure of $\cE_j$ does
not contain a minimum of the effective potential, thus it must contain
a maximum.

First we analyze the regions~(1) following the approach adopted in~\cite{RCD}.
The idea is to exploit the fact that, for one-dimensional systems, the
Birkhoff normal form is convergent close to a nondegenerate minimum of the
potential.  Thus, at least in principle, one can effectively introduce
action-angle variables via Birkhoff normal form and compute the expansion of the
Arnold determinant at a minimum. In~\cite{RCD}, using the expansion of the
Hamiltonian at order two in the actions, it was shown that the zero order
term of the Arnold determinant vanishes if and only if the potential fulfills a
suitable forth order differential equation.  In the present paper, pushing
further the expansions with the aid of the Mathematica$^{TM}$ algebraic
manipulator, we expand the Hamiltonian up to order four and we show that the
first order term of the Arnold determinant vanishes if the potential fulfills a
suitable sixth order differential equation. Finally, still using symbolic
manipulation, we show than only the Keplerian and Harmonic potentials satisfy
these differential equations.  Thus, these are the only two cases in which
the expansion of the Arnold determinant at the minimum vanishes.

Second we study the regions (2).  Here we compute the asymptotic
expansion of the Arnold determinant at the maximum and prove that it
is divergent. Thus, if the effective potential has a maximum, then the
Hamiltonian is almost everywhere quasiconvex in the region above it.
The formula for the behavior of the actions at the maximum has been
already obtained by Neishtadt in \cite{neishtadt1987change}.  Here we
give an independent proof based on a normal form result for one
dimensional Hamiltonians at saddle points due to Giorgilli
\cite{Gio01}.  We remark that similar formul{\ae} have also been
re-derived by Biasco and Chierchia in \cite{2017arXiv170206480B} for
the study of the measure of invariant tori in nearly-integrable
systems.

\noindent{\it Acknowledgments.} We thank Francesco Fass\`o for
suggesting the connection between the result of~\cite{RCD} and the
Bertrand's theorem; Francesco De Vecchi for a hint on the method to
find the common solutions of two differential equations; Anatoly
Neishtadt for suggesting the formula giving the expansion of the
action variable at a maximum, namely~\eqref{I1finale}; Luca Biasco for
a discussion on the asymptotic expansion of the actions at maxima of
the potential and Michela Procesi for suggesting to use our strategy
for a new proof of Bertrand's theorem.

\section{Generalized action-angle variables}\label{statement}
First of all we recall that in general a superintegrable system does
not admit global action-angle variables. This comes from the fact
that, in general, the subset of the phase space corresponding to a fixed
value of the actions has a nontrivial topology.  

A general theory of superintegrable systems has been developed in
\cite{Fas95, Fas05, Ka012, MF78}. Applying such a theory and using the
specific form of the Hamiltonian we are going to prove a theorem
(Lemma~\ref{az.ang.3}) giving the structure of the phase space and
describing generalized action-angle variables for the spatial central force
problem. This is a refinement of Theorems~1 and~2 in~\cite{RCD}.

\vskip 5pt

Consider again the Hamiltonian~\eqref{carte}.  In order to define the set where
the modulus of the angular momentum varies, we introduce
$$
\begin{aligned}
L_m^2&\coloneqq\min\Bigl\{\reali_{>0}\cap\left[\text{{\rm Range}}(r^3V'(r))\right]\cap[\ell^*,+\infty]
\Bigr\}\ , \\
L_M&\coloneqq
\begin{cases}
  \sup \sqrt{r^3V'(r)}\ ,& \text{if } \sup r^3V'(r)<+\infty\ ;\\
  \hat{L}_M\ ,& \text{if } \sup r^3V'(r)=+\infty\ ;\\
\end{cases}
\end{aligned}
$$
where $\hat{L}_M$ is an arbitrary large positive number.

The modulus of the angular momentum will be assumed to vary in
\begin{equation*}
\cI\coloneqq(L_m,L_M)\ .
\end{equation*}

\begin{remark}
\label{initial}
One can think of studying the orbit of the system corresponding to
a given initial datum, in the spirit of Theorem
  \ref{senzaC}. Correspondingly, having fixed an initial datum giving
rise to bounded orbits, one can compute the value of $L_0$ of the
angular momentum and can take $L_M>L_0$. Of course one can expect all
the constants to deteriorate as $L_M$ increases.
\end{remark}

The Hamiltonian in action-angle variables has the same form as in the
planar case, so to come to a precise statement, which will be useful
for the proof of Theorem \ref{Nek.1}, we start to study the planar
case.

Consider the planar case and introduce polar coordinates. Then the
action variables are $I_2:=p_\theta$ and the action $I_1$ of the
effective radial system with Hamiltonian $\heff$. The domains of
definition of $I_1$ are determined by the sublevels of $\heff$, which
in turn are determined by $\Veff$. First we want to have stability of
the sublevels for small changes of $p_\theta$. To this end we will
prove that for all values of $p_\theta$ except at most a finite
number, $\Veff$ is a Morse function. Thus one gets that $\cI$ can be
decomposed into the union of a finite number of open segments $\cI_j$ plus a
finite number of points, with the property that the shape of $\Veff$
does not change for $p_\theta\in\cI_j$ and furthermore all the
critical points of $\Veff$ are nondegenerate. We will show that one
can also select the intervals $\cI_j$ in such a way that in such
intervals all the critical levels of $\Veff$ are distinct. 

\begin{figure}
  \centering
  \includegraphics[width=0.8\textwidth]{./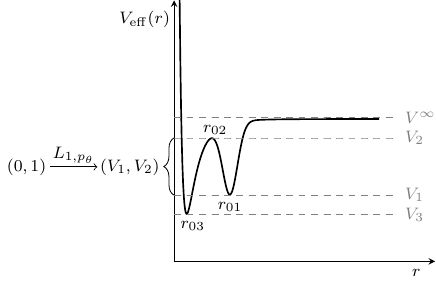}
  \caption{A double-well interaction potential (Lennard-Jones-Gauss).}
  \label{fig1}
\end{figure}

Then one has to introduce the action $I_1$. In order to describe the
procedure we make reference to a particular shape depicted in
Fig.~\ref{fig1}. First we remark that, as $p_\theta$ varies in
$\cI_j$, the critical points of $\Veff$ can move, but they cannot cross and their
levels cannot cross neither. So, fix an arbitrary value of
$p_\theta\in\cI_j$. Consider the sublevel $\heff<V_2$, then the first
domain where one can introduce action-angle coordinates is its
connected component containing $r_{01}$, to which one has to eliminate
the point $(p_r,r)=(0,r_{01})$. So, such a domain is isomorphic to an
interval $E\in (V_1,V_2)$ times the level curve $\heff=E$. In order to
make clear the topological structure of the domains, in the statement
of the forthcoming theorem we also introduce an affine map
$L_{1,p_{\theta}}$ which transforms the interval $(0,1)$ into
$(V_1,V_2)$. Of course the map (as well as the target interval) depend
on the value of $p_\theta$ as well as on the domain identified by the
critical point $r_{0i}$ ($r_{01}$ in our case).  Then one repeats the
construction in all the other domains getting the complete picture.

To give a precise $2-d$ statement, define, for $p_{\theta}\in \cI$ and arbitrary $E\in\R$, 
\begin{equation*}
\cL^{(2)}(p_{\theta},E)\coloneqq\bigl\{\, (r,p_r)\, :\, \heff (r,p_r,p_{\theta})=E\,\bigr\}\ .
\end{equation*}
As $E\in\R$ and $p_{\theta}\in\cI$ vary, the sets
$\cL^{(2)}(p_{\theta},E)$ can be empty or can have one or more
connected components. For fixed $E$, we enumerate the nonempty compact
connected components by $\cL_i^{(2)}(p_{\theta},E)$.

Defining
\begin{equation*}
\cP_A^{(2)}\coloneqq\left\{(r,p_r,\theta,p_\theta)\ :\ \theta\in\toro\ ,\ p_\theta\in\cI\ ,\ (r,p_r)\in
\bigcup_{E\in\R}\bigcup_i\cL^{(2)}_{i}(p_\theta,E)\right\} \ .
\end{equation*}
we have the following lemma.

\begin{lemma}
\label{az.ang}
There exists a finite number of open intervals $\cI_j$ such that, for
any $p_{\theta}\in\cI_j$ there exist a finite number of affine maps,
$L_{i,p_\theta}\colon(0,1)\to\R$ with the following property:
define
\begin{equation*}
  \cO_{i,j}^{(2)}\coloneqq  \left\{(r,p_r,\theta,p_\theta)\, :\, \theta\in\toro\ ,\ p_\theta\in\cI_j\ ,\ (r,p_r)\in
\cL^{(2)}_{i}(p_\theta,L_{i,p_\theta}\te)\ ,\  \te\in (0,1)\right\}\ ,
\end{equation*}
and $\cS^{(2)}\coloneqq\cP_A^{(2)}\setminus \bigcup
_{i,j}\cO_{i,j}^{(2)}$; then
\begin{itemize}
\item [(i)] 
$\cS^{(2)}$ is the union of a finite
  number of analytic hypersurfaces;
\item[(ii)]
  on each of the
domains $\cO_{i,j}^{(2)}$ there exists an analytic diffeomorphism
\begin{align*}
\Phi_{i,j}^{(2)}:\cO_{i,j}^{(2)}&\to\cA_{i,j}\times \toro^2\ ,\qquad \cA_{i,j}\subset\R^2
\\
(r,p_r,\theta,p_{\theta})&\mapsto (I_1,I_2,\alpha_1,\alpha_2)
\end{align*}
which introduces action-angle variables;
\item[(iii)]
for fixed
$p_\theta\in\cI_j$, the infimum of $\heff$ over
$$
\{(r,p_r)\ :\ (r,p_r,\theta,p_{\theta})\in \cO_{i,j}^{(2)}\}
$$
is either a nondegenerate minimum or a
nondegenerate maximum of $\Ve(.;p_\theta^2)$. The function $\heff $
has no other critical points in $\cO_{i,j}^{(2)}$. 
\end{itemize}
\end{lemma}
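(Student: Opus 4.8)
The plan is to reduce the whole statement to a one–parameter Morse analysis of the effective potential and then to glue in the standard Arnold--Liouville construction well by well. Throughout I regard $s\coloneqq p_\theta^2$ as a parameter and set $g(r)\coloneqq r^3V'(r)$, a real analytic function on $\reali_{>0}$. A direct computation gives
\begin{equation*}
\parder{\Veff}{r}=V'(r)-\frac{p_\theta^2}{r^3}=\frac{g(r)-s}{r^3}\ ,\qquad \nparder{2}{\Veff}{r}(r_0)=\frac{g'(r_0)}{r_0^3}\ ,
\end{equation*}
the second identity at a critical point $r_0$, i.e. at a solution of $g(r_0)=s$. Hence the critical points of $\Veff(\cdot,s)$ are exactly the roots of $g(r)=s$ --- finitely many for every $s=p_\theta^2$ with $p_\theta\in\cI$, since then $s>L_m^2\ge\max\{0,\ell^*\}$ and (H3) applies --- and such a critical point is degenerate precisely when $g'(r_0)=0$, that is, when $s$ is a critical value of $g$.

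The heart of the matter is to show that only finitely many $p_\theta\in\overline{\cI}$ are bad, and this is the step I expect to be the main obstacle. If $g'\not\equiv0$, its zeros are isolated by analyticity; moreover (H1) controls $\Veff$ as $r\to0^+$ and the definitions of $L_m,L_M$ control it at the other end, so that the critical points bounding the compact components of $\{\heff=E\}$ stay in a fixed compact subinterval of $\reali_{>0}$, uniformly for $s$ in the compact range $\{p_\theta^2:p_\theta\in\overline{\cI}\}$. On that subinterval $g'$ has only finitely many zeros, hence $g$ has only finitely many critical values there, hence only finitely many bad $p_\theta$. The excluded alternative $g'\equiv0$ means $g$ constant, i.e. $V(r)=-c/(2r^2)+d$; then $\Veff$ is monotone in $r$, there are no compact components, and the statement is vacuous. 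Removing from $\cI$ these finitely many bad points, together with the finitely many $p_\theta$ at which two distinct critical levels of $\Veff$ coincide (again isolated, since the critical levels depend analytically on $p_\theta$ by the implicit function theorem), splits $\cI$ into finitely many open intervals $\cI_j$ on each of which $\Veff(\cdot,s)$ is Morse, with a constant number of nondegenerate critical points whose positions and levels vary analytically and never cross.

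With this structure the remaining assertions are the standard Arnold--Liouville construction applied well by well. Fix $\cI_j$ and two consecutive critical levels $V_1<V_2$ of $\Veff$; for $E\in(V_1,V_2)$ the corresponding family of compact components $\cL^{(2)}_i(p_\theta,E)$ is a smoothly varying oval in the $(r,p_r)$ plane, and the affine map $L_{i,p_\theta}\colon(0,1)\to(V_1,V_2)$ merely normalises this energy band. On the resulting open set I would set $I_2\coloneqq p_\theta$ and $I_1\coloneqq\frac1{2\pi}\oint p_r\,dr$ with conjugate angles; since each oval encircles a single nondegenerate equilibrium and $\partial I_1/\partial E$ is the positive finite period over $2\pi$, the map $\Phi^{(2)}_{i,j}$ is an analytic diffeomorphism onto $\cA_{i,j}\times\toro^2$, which is (ii). The critical points of $\heff=p_r^2/2+\Veff$ are exactly the equilibria $(r_0,0)$ with $\Veff'(r_0)=0$; the infimum of $\heff$ over $\cO^{(2)}_{i,j}$ equals $V_1$ and is realised, as $E\downarrow V_1$, at the bottom critical point of the band --- a nondegenerate minimum in the well case (1) and a nondegenerate maximum in the case (2), where the band is bounded below by the separatrix through a saddle --- while no other equilibrium lies in the open band, giving (iii). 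Finally, $\cS^{(2)}=\cP_A^{(2)}\setminus\bigcup_{i,j}\cO^{(2)}_{i,j}$ is the union of the equilibria, of the separatrices (level sets of the analytic function $\heff$ through the saddles), and of the finitely many slices $\{p_\theta=\text{const}\}$ removed above; each is an analytic hypersurface and there are finitely many of them, which is (i).
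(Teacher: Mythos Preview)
Your reduction of the Morse step to the single observation that a critical point $r_0$ of $\Veff(\cdot,s)$ is degenerate exactly when $g'(r_0)=0$, i.e.\ exactly when $s$ is a critical value of $g(r)=r^3V'(r)$, is correct and is genuinely different from what the paper does. The paper works locally: near each degenerate critical point it runs an implicit function theorem with a fractional reparametrisation $(\ell-\bar\ell)^{1/n}$ to show that the critical point continues to nearby $\ell$ and becomes nondegenerate there, and it treats the inflection case by a separate bifurcation argument. Your global argument gets the same finiteness in one stroke and makes the role of (H3) and the analyticity of $V$ much more transparent; the paper's route, by contrast, yields extra information on \emph{how} the critical points continue through the degenerate values, which is not actually needed for the lemma. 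Your compactness step (confining the relevant critical points to a fixed $r$-interval uniformly in $s$) is sketched rather than proved, but the paper is equally brief at the corresponding point.

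There is, however, a real gap in your treatment of the critical \emph{levels}. You assert that the $p_\theta$ at which two distinct critical levels coincide are ``again isolated, since the critical levels depend analytically on $p_\theta$''. Analyticity alone is not enough: the difference of two real analytic functions can vanish on an interval, and nothing you have written excludes two branches of critical levels being identically equal. The paper supplies the missing ingredient in its Lemma~\ref{livelli}: at a critical point $r_0(\ell)$ one has
\[
\frac{d}{d\ell}\,\Veff\bigl(r_0(\ell),\ell\bigr)=\frac{1}{2\,r_0(\ell)^2},
\]
which depends only on the position $r_0$; hence two distinct branches $r_0^{(1)}\neq r_0^{(2)}$ have critical levels with \emph{different} $\ell$-derivatives at any coincidence, forcing such coincidences to be isolated. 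The same derivative computation also shows each critical level meets the constant level $V^{\infty}\coloneqq\lim_{r\to\infty}V(r)$ transversally; you need this too, since $V^{\infty}$ serves as the upper edge of the outermost energy band in the construction of the affine maps, and you do not mention it. Once you insert this short computation your argument is complete and, on the Morse part, cleaner than the paper's.
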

This Lemma will be proved in Section~\ref{Act.ang}.

\noindent The most important part of the statement, for the proof of Theorem
\ref{Nek.1}, is ({\it iii}). 

\begin{remark}
In each of the domains $\cO_{i,j}^{(2)}$ the Hamiltonian, written in terms of
the action-angle variables, depends on the actions only. Furthermore
it is a function 
$$
h_{i,j}:\cA_{i,j}\to\R\ ,                                                
$$
which is analytic in the whole of $\cA_{i,j}$. This is a simple
consequence of the fact that the maps $\Phi_{i,j}^{(2)}$ are analytic
diffeomorphisms. 
\end{remark}

We come now to the three dimensional case.

Given $L\in\cI$ and arbitrary $E\in\R$, , we define
\begin{equation*}
\cL(L,E)\coloneqq\Bigl\{ (\bx,\bp)\, :\, L^2(\bx,\bp)=L^2\  \text{and}\  \hze (\bx,\bp)=E\Bigr\}\ .
\end{equation*}
The sets $\cL(L,E)$ can be empty or can have one or more connected
components. As in the $2-d$ case, we denote by 
$\cL_i(L,E)$ each compact connected component of
$\cL(L,E)$.
Define
\begin{equation*}
\cP_A\coloneqq\bigcup_{L\in\cI}\bigcup_{E\in\R}\bigcup_i\cL_i(L,E) \ .
\end{equation*}

We have the following result
\begin{lemma}
\label{az.ang.3}
Corresponding to any domain $\cO_{i,j}^{(2)}$ of Lemma \ref{az.ang}, there exists an open set $\cO_{i,j}\subset
\cP_A$, s.t., defining $\cS\coloneqq\cP_A\setminus \bigcup
_{i,j}\cO_{i,j}$ one has
\begin{itemize}
\item [(i)] 
$\cS$ is the union of a finite
  number of analytic hypersurfaces;
\item[(ii)] each of the domains $\cO_{i,j}$ can be covered by systems
  of generalized action-angle coordinates with the same action
  variables. Precisely, $\cO_{i,j}$ has the structure of a bifibration
  \begin{equation*}
\cO_{i,j}\mathop{\to}^{F_{i,j}}\cM_{i,j}\mathop{\to}^{\widetilde F_{i,j}}\cA_{i,j}\subset\R^2\ ,
  \end{equation*}
with the following properties
\begin{itemize}
\item[(1)] Every fiber of
  $\displaystyle{\cO_{i,j}\mathop{\to}^{F_{i,j}}\cM_{i,j}}$ is diffeomorphic to $\mathbb{T}^2${\rm\thinspace;}
\item[(2)] the bifibration is symplectic, i.e., every fiber of
  $\displaystyle{\cO_{i,j}\mathop{\to}^{F_{i,j}}\cM_{i,j}}$ has a
  neighborhood $U$ endowed with an analytic diffeomorphism 
\begin{equation}
\label{actvar}
\Phi_U:U\to p(U)\times q(U)\times\mathcal A_{i,j}\times \mathbb{T}^2\ni (p,q,I,\alpha)
\end{equation}
such that the level sets of $F_{i,j}$ coincide with the level sets
of $p\times q \times I$ and the symplectic form becomes
\begin{equation*}
dp\wedge dq+dI_1\wedge d\alpha_1+dI_2\wedge d\alpha_2\ .
\end{equation*}
The functions $(I_1,I_2)$ are globally defined on $\cO_{i,j}$.
\end{itemize}
\item[(iii)]  in each of the domains $\cO_{i,j}$, the Hamiltonian, as a
  function of the action variables $(I_1,I_2)$, coincides with the
  Hamiltonian $h_{i,j}$ of the planar central motion in $\cO_{i,j}^{(2)}$.
\end{itemize}
\end{lemma}

\section{Statement of the quasiconvexity result and of a further Nekhoroshev
  type result}\label{stat}

\begin{definition}
Let $\cA\subset\R^2$ be open. A function $h\colon\cA\to\R$ is said to be
quasiconvex at a point $I^*\in\cA$ if
\begin{equation*}
\left\{\biggl\langle{\eta}\,,\ \nparder{2}{h }{I}(I^*) \eta\biggr\rangle=0\quad \text{and} \quad 
\biggl\langle{\parder{h }{I}(I^*)}\,,\ {\eta}\biggr\rangle =0 
\right\}\ \Longrightarrow \eta=0\ .
\end{equation*}
\end{definition}

Our main technical result is the following theorem.

\begin{theorem}
\label{bert}
Assume that the potential $V(r)$ is real analytic and satisfies
(H1)--(H3), as in Theorem~\ref{Nek.1}, then one of the following two
alternatives hold:
\begin{itemize}
\item[(1)] for every $i,j$ there exists at most one analytic hypersurface
  $\cK_{i,j}\subset \cA_{i,j}$, s.t. $h_{i,j}$ is quasiconvex for all 
  $(I_1,I_2)\in\cA_{i,j}\setminus\cK_{i,j}$.
\item[(2)] there exists $k>0$ s.t. $V(r)=k r^2$ or $V(r)=-k/r$. 
\end{itemize}
\end{theorem}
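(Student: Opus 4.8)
The plan is to reduce quasiconvexity of each $h_{i,j}$ to the nonvanishing of the Arnold determinant, and then to detect its identical vanishing through the asymptotics at circular orbits. By the given definition, quasiconvexity of $h$ at $I^*$ means that the Hessian $\partial^2 h/\partial I^2$ restricted to the kernel of the differential $\partial h/\partial I$ is nondegenerate. For two actions this kernel is one-dimensional, so nondegeneracy of the restricted quadratic form is equivalent to the nonvanishing of the single scalar
\begin{equation*}
D(I_1,I_2)\coloneqq \det\begin{pmatrix}
\nparder{2}{h}{I_1} & \dfrac{\partial^2 h}{\partial I_1\partial I_2} & \parder{h}{I_1}\\
\dfrac{\partial^2 h}{\partial I_1\partial I_2} & \nparder{2}{h}{I_2} & \parder{h}{I_2}\\
\parder{h}{I_1} & \parder{h}{I_2} & 0
\end{pmatrix}\ ,
\end{equation*}
the Arnold (bordered Hessian) determinant: $h_{i,j}$ is quasiconvex at $(I_1,I_2)$ precisely when $D\neq 0$ there. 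As observed in the Remark following Lemma~\ref{az.ang.3}, $h_{i,j}$ is analytic on the connected open set $\cA_{i,j}$, hence so is $D$. By the identity principle for real-analytic functions, either $D\equiv 0$ on $\cA_{i,j}$, or its zero set is a proper analytic subvariety, which we take as $\cK_{i,j}$; in the latter case quasiconvexity holds on $\cA_{i,j}\setminus\cK_{i,j}$, giving alternative~(1). The whole theorem thus reduces to excluding $D\equiv 0$ unless $V$ is Keplerian or Harmonic.

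To decide identical vanishing I would study $D$ asymptotically at the boundary critical point of the effective potential $\Ve(\cdot,p_\theta^2)$ supplied by item~(iii) of Lemma~\ref{az.ang}, distinguishing the minimum and maximum cases. In the minimum case I would exploit that, near a nondegenerate minimum of a one-dimensional analytic potential, the Birkhoff normal form converges, so action-angle variables can be constructed explicitly and $h_{i,j}$ expanded in powers of the radial action $I_1$ about the circular orbit. Pushing this expansion to fourth order in the actions (with the aid of symbolic computation) yields $D=D_0+D_1 I_1+\cdots$; I would then show that $D_0$ vanishes exactly when $V$ satisfies a certain fourth-order ODE, and that, on the locus where $D_0\equiv 0$, the coefficient $D_1$ vanishes exactly when $V$ additionally satisfies a certain sixth-order ODE. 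A classification of the common analytic solutions of this overdetermined system then pins down $V(r)=kr^2$ and $V(r)=-k/r$ as the only possibilities, yielding alternative~(2).

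In the maximum case I would instead compute the asymptotics of the actions, and hence of $D$, as the energy approaches a nondegenerate maximum of $\Ve$. Using a normal form for one-dimensional Hamiltonians at a hyperbolic saddle (in the spirit of Giorgilli, and matching Neishtadt's formula for the action at a maximum), the radial action $I_1$ acquires a logarithmic singularity, whose derivatives blow up; a direct estimate then shows that $D$ diverges as the maximum is approached and in particular cannot be identically zero on $\cA_{i,j}$. Hence in any region whose closure contains a maximum, alternative~(1) holds automatically, and one is left only with the regions of type~(1) treated above.

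The main obstacle is the computation and classification in the minimum case: carrying the convergent Birkhoff expansion to fourth order and extracting $D_1$ is a substantial symbolic calculation, and solving the resulting overdetermined system of a fourth- and a sixth-order differential equation---showing that, among all analytic potentials compatible with (H1)--(H3), only $kr^2$ and $-k/r$ satisfy both---is the delicate step. This is exactly where the connection with Bertrand's theorem surfaces, and where careful elimination between the two differential equations, rather than a brute-force integration, is needed.
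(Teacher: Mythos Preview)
Your proposal is correct and follows essentially the same approach as the paper: reduce quasiconvexity to nonvanishing of the Arnold determinant, use analyticity to conclude that the zero set is either everything or a hypersurface, and then exclude identical vanishing by asymptotics at the boundary critical point of $\Ve$, splitting into the minimum case (convergent Birkhoff normal form, expansion to high order, overdetermined ODE system whose only admissible solutions are Kepler and Harmonic) and the maximum case (Giorgilli/Neishtadt saddle normal form, logarithmic action, divergence of $D$). The paper carries out the minimum case through the equivalent Burgers reformulation $\partial_{I_1}\nu=\nu\,\partial_{I_2}\nu$ with $\nu=\omega_1/\omega_2$, and uses the change of variable $I_2\mapsto r_0$ together with $g(r_0)\coloneqq r_0V''(r_0)/V'(r_0)$, which makes the resulting ODEs depend on $g$ alone and renders the elimination tractable; this is the concrete mechanism behind the ``delicate step'' you flag.
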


\vskip10pt Theorem~\ref{Nek.1} immediately follows (see
Section~\ref{sec:4} for the details).
\begin{remark}
Both in the Harmonic and the Keplerian cases, the Hamiltonian only depends 
on one action, thus the steep Nekhoroshev theorem does not apply (see,
e.g.,~\cite{guzzo2016steep}).
\end{remark}

\bigskip
A remarkable feature of Nekhoroshev's theorem is that it
applies also to the case where the central force problem is in interaction
with a further slow system. 
Let $P\in C^\omega(\cP_A\times \cU^{(2n)})$, where $\cU^{(2n)}\subset
\R^{2n}$ is open, and consider the following Hamiltonian with $n+3$
degrees of freedom
\begin{equation}
  \label{fast}
H_\epsilon(\bx,\bp,\widehat{\bx},\widehat{\bp})\coloneqq
\frac{1}{\epsilon}\hze (\bx,\bp)+P (\bx,\bp,\widehat{\bx},\widehat{\bp})
\ .
\end{equation}
This describes, for example, a microscopic central system, so
characterized by very light particles subject to very intense forces,
interacting with a macroscopic subsystem. For example one could
consider $V$ to be the Lennard-Jones potential.

To give a precise statement define\footnote{In the forthcoming Theorem~\ref{nek.2} we will implicitly assume
$\cE(E_0,\bx,\bp)\subset\cU^{(2n)}$ for the values of $E_0$ that we
will consider.} 
\begin{equation*}
  \cE(E_0;{\bx},{\bp})\coloneqq\left\{(\widehat{\bx},\widehat{\bp})\in\R^{2n}\, :\, P({\bx}, {\bp},\widehat{\bx},\widehat{\bp})<E_0 \right\}\ .
\end{equation*}

\begin{theorem}
  \label{nek.2}
Assume that $V(r)$ is neither Harmonic nor Keplerian, let $\cC$ be
fixed as in Theorem~\ref{Nek.1} and let
$$ \cE(E_0)\coloneqq\bigcup_{({\bx},{\bp})\in\cC } \cE(E_0;{\bx},{\bp})\ .$$
Assume that there exists $E_0$ such that the function $P$ extends to a
bounded analytic function on some complex neighborhood of
$\cC\times\cE(E_0)$, then there exist positive $\epsilon^*$, $C_1$,
$C_2$, $C_3$ and $C_4$ with the following property: for
$|\epsilon|<\epsilon^*$, considering the dynamics of the Hamiltonian
system~\eqref{fast},  then, for any initial datum in $\cC\times\cE(E_0)$
one has that~\eqref{est.nek} hold for the times~\eqref{time}.
\end{theorem}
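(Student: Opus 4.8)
The plan is to reduce Theorem~\ref{nek.2} to a single application of the quasiconvex Nekhoroshev theorem for a nearly-integrable system whose integrable part is quasiconvex in the two actions on which it depends. First I would use the generalized action-angle variables of Lemma~\ref{az.ang.3}: since $\cC$ is compact and avoids the exceptional hypersurfaces, it is covered by finitely many of the domains $\cO_{i,j}$, on each of which $\hze$ becomes a function $h_{i,j}(I_1,I_2)$ of the two globally defined actions alone, with conjugate fast angles $(\alpha_1,\alpha_2)\in\toro^2$ and the $S^2$-fibre direction carrying no unperturbed dynamics. In these coordinates the slow variables $(\widehat{\bx},\widehat{\bp})$ are untouched, so the Hamiltonian~\eqref{fast} takes the form $H_\epsilon=\tfrac1\epsilon h_{i,j}(I_1,I_2)+P(I_1,I_2,\alpha_1,\alpha_2,\widehat{\bx},\widehat{\bp})$.

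Second, I would pass to the rescaled Hamiltonian $K\coloneqq\epsilon H_\epsilon=h_{i,j}(I_1,I_2)+\epsilon P$, whose orbits coincide with those of $H_\epsilon$ up to the time reparametrization $\sigma=t/\epsilon$, and in which $\epsilon$ is now an ordinary small perturbation parameter. Here $h_{i,j}$ depends only on $(I_1,I_2)$, so the fast angles rotate with $O(1)$ frequencies $\partial h_{i,j}/\partial I$ while the slow variables $(\widehat{\bx},\widehat{\bp})$ have vanishing unperturbed frequency; this is exactly the slow-fast structure treated in \cite{BG93,vincoli1,vincoli2,teufel}. By Theorem~\ref{bert}, since $V$ is neither Harmonic nor Keplerian, $h_{i,j}$ is quasiconvex in $(I_1,I_2)$ off the hypersurface $\cK_{i,j}$, which $\cC$ was assumed to avoid.

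Third, I would invoke the quasiconvex Nekhoroshev theorem in the superintegrable, slow-fast setting (the bifibred geometry of \cite{Fas95,Blaom} combined with the slow-variable analysis of \cite{BG93}). The decisive point is that the slow variables produce no small divisors, so the geometric confinement mechanism operates entirely in the two-dimensional fast action space, where quasiconvexity of $h_{i,j}$ forbids drift. The analyticity hypothesis on $P$ supplies the analytic norms needed for the exponential estimates, while the sublevel set $\cE(E_0)$ keeps the slow motion inside the domain of definition through the near-conservation of $P$, itself a consequence of conservation of $H_\epsilon$ and near-conservation of $\hze$. This yields $|I_1(\sigma)-I_1(0)|,\,|I_2(\sigma)-I_2(0)|\le C\epsilon^{1/4}$ for $|\sigma|\le C'\exp(C_4\epsilon^{-1/4})$.

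Finally, I would translate back: $I_2=L$ gives the first bound in~\eqref{est.nek}, and $\hze=h_{i,j}(I_1,I_2)$ together with the confinement of both actions gives the second. Undoing the rescaling $\sigma=t/\epsilon$ converts the $\sigma$-time into $|t|\le C'\epsilon\exp(C_4\epsilon^{-1/4})$, which is absorbed into~\eqref{time} after a harmless reduction of $C_4$, since $\epsilon\,\exp(C_4\epsilon^{-1/4})\ge\exp(\tfrac12 C_4\epsilon^{-1/4})$ for small $\epsilon$; taking the worst constants over the finitely many $\cO_{i,j}$ finishes the argument. The main obstacle I anticipate is the rigorous deployment of the Nekhoroshev estimate in this degenerate geometry, where $h_{i,j}$ depends on only two of the $n+3$ actions and the slow part is not in action-angle form, so care is needed to cite the correct partially-quasiconvex slow-fast version and to verify its hypotheses, in particular the confinement of the slow variables, uniformly on $\cC\times\cE(E_0)$.
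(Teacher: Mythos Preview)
Your proposal is correct and follows the approach the paper intends: the paper itself does not spell out a proof of Theorem~\ref{nek.2}, treating it as a direct consequence of Theorem~\ref{bert} together with the slow--fast Nekhoroshev framework of \cite{vincoli1,vincoli2,vincolilombardi,BG93} in the superintegrable geometry of \cite{Fas95,Blaom}, which is exactly the route you sketch (action--angle reduction via Lemma~\ref{az.ang.3}, time rescaling, quasiconvex Nekhoroshev in the partially-degenerate setting). One small omission: when you write $P(I_1,I_2,\alpha_1,\alpha_2,\widehat{\bx},\widehat{\bp})$ you drop the $(p,q)$ coordinates on the $S^2$ fibre from~\eqref{actvar}; these must be carried along as additional slow variables (they too have zero unperturbed frequency), but this does not affect the argument.
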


\begin{remark}
An interesting different application pertains the dynamics of the
 soliton of NLS in an external potential. In $\R^3$, consider
 \begin{equation}
   \label{nls}
{\it i}\dot\psi=-\Delta\psi-f'(\left|\psi\right|^2)\psi+\epsilon V\psi\ ,
 \end{equation}
 with $V$ a central potential of class Schwartz (the interesting case
 is that in which it is a potential well) and $f$ a smooth function
 with a zero of order one at the origin and with the further property
 that~\eqref{nls} admits a stable soliton solution for
 $\epsilon=0$. It is known (see, e.g.,~\cite{FJ}) that when
 $\epsilon\neq0$, in first approximation, the soliton moves as a
 mechanical particle described by the Hamiltonian $\hze $ with $V$
 substituted by a radial effective potential. Then in~\cite{BM16} it
 was shown that, for times longer then any power of $\epsilon^{-1}$,
 there is essentially no exchange of energy between the soliton and
 the rest of the field. Exploiting the result of the present paper one
 can add Nekhoroshev type techniques in order to show that, for the
 same time scale and for the majority of initial data, also the
 angular momentum of the soliton is almost conserved, and thus in
 particular it does not approach the bottom of the potential well. We
 avoid here a precise statement, since this would require a quite
 technical and long preparation.
\end{remark}

\section{Proof of Lemmas \ref{az.ang} and \ref{az.ang.3}}\label{Act.ang}

We start with the proof of Lemma~\ref{az.ang}. The first point is to
show that, except for at most finitely many values of $p_\theta \in\cI$,
the effective potential has only nondegenerate critical points and the
corresponding critical levels are distinct (see
Lemma~\ref{livelli}). 

\begin{remark}
\label{r.act.1}
Due to assumption (H2) there {\bf do not exist} constants $k_1,k_2$
s.t. the  potential has the form
\begin{equation*}
V(r)=\frac{k_1}{2r^2}+k_2\ .
\end{equation*}
\end{remark}

We also assume $L\not=0$.

\begin{lemma}\label{lemma4.1}
Let $(\bar r,\bar\ell)$ be such that $\bar r$ is an extremum of
$\Ve(.,\bar\ell)$. Then there exists an odd $n$, a neighborhood
$\cU$ of $0$ and a function
$r_0=r_0((\ell-\bar\ell)^{1/n})$ analytic in $\cU$,
s.t. $r_0((\ell-\bar\ell)^{1/n})$ is an extremum of
$\Ve(.;\ell)$. Furthermore $r_0(0)=\bar r$, and for any
$\ell\not=\bar\ell$ the extremum is nondegenerate.
\end{lemma}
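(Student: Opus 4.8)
The plan is to reduce everything to the study of the single analytic function $\phi(r):=r^3V'(r)$, since the critical points of $\Ve(\cdot,\ell)$ are exactly the solutions of $\phi(r)=\ell$. First I would record the two identities on which the whole argument rests. Differentiating $\Ve(r,\ell)=V(r)+\ell/(2r^2)$ gives $\partial_r\Ve(r,\ell)=(\phi(r)-\ell)/r^3$, so a critical point at $r$ means $\phi(r)=\ell$; differentiating once more and using $\ell=\phi(r)$ at such a point yields $\partial_r^2\Ve(r,\ell)=\phi'(r)/r^3$ there. Thus a critical point is degenerate precisely when $\phi'$ vanishes on it, and the entire problem becomes one about the zeros of $\phi-\ell$ and of $\phi'$.

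Next I would pin down the order of contact. Because of (H2) (equivalently Remark~\ref{r.act.1}), $\Ve(\cdot,\bar\ell)$ cannot be locally constant — constancy would force $V(r)=C-\bar\ell/(2r^2)$ — so $\phi$ is a nonconstant analytic function and $\phi-\bar\ell$ has a zero of some finite order $n\geq1$ at $\bar r$. Here I use the hypothesis that $\bar r$ is an \emph{extremum}: for the analytic function $\Ve(\cdot,\bar\ell)$ a local extremum forces the first nonvanishing derivative to be of even order, i.e. $\Ve(r,\bar\ell)-\Ve(\bar r,\bar\ell)$ vanishes to even order $n+1$, whence $n$ is odd. This is exactly the claimed parity, and it is the step I expect to require the most care, since it is where the extremum hypothesis and analyticity (via the exclusion coming from (H2)) must be combined cleanly.

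I would then factor $\phi(r)-\bar\ell=(r-\bar r)^n\psi(r)$ with $\psi$ analytic and $\psi(\bar r)\neq0$, and invert. Setting $u=r-\bar r$ and choosing an analytic branch of $\psi(\bar r+u)^{1/n}$ near the nonzero value $\psi(\bar r)$, the map $w=u\,\psi(\bar r+u)^{1/n}$ satisfies $w^n=\phi(r)-\bar\ell=\ell-\bar\ell$ and has nonvanishing derivative at $u=0$; by the analytic inverse function theorem $u$, hence $r$, is an analytic function of $w=(\ell-\bar\ell)^{1/n}$, giving the desired $r_0$ with $r_0(0)=\bar r$. Since $n$ is odd, the real $n$-th root $(\ell-\bar\ell)^{1/n}$ is single-valued for all real $\ell$ near $\bar\ell$, which makes this parametrization genuinely defined on a two-sided neighborhood $\cU$ of $0$.

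Finally, for nondegeneracy I would return to $\partial_r^2\Ve|_{\rm crit}=\phi'(r_0)/r_0^3$. From the factorization, $\phi'(r)=(r-\bar r)^{n-1}\bigl(n\psi(r)+(r-\bar r)\psi'(r)\bigr)$, whose bracket equals $n\psi(\bar r)\neq0$ at $\bar r$; hence $\phi'$ vanishes to order exactly $n-1$ at $\bar r$ and is nonzero for $r\neq\bar r$ close to $\bar r$. Because $r_0(w)=\bar r$ only for $w=0$ (i.e. $\ell=\bar\ell$), for every $\ell\neq\bar\ell$ we get $r_0\neq\bar r$, so $\phi'(r_0)\neq0$ and the critical point is nondegenerate. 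A nondegenerate critical point of a one–variable function is automatically a genuine extremum, which also confirms that $r_0((\ell-\bar\ell)^{1/n})$ is an extremum of $\Ve(\cdot;\ell)$ as claimed.
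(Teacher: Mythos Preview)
Your argument is correct and follows essentially the same route as the paper's proof: both identify the odd order $n$ from the extremum hypothesis, rewrite the critical-point equation so as to invert a map of the form $\delta\mapsto \delta\cdot(\text{unit})^{1/n}$ via the analytic inverse function theorem, and then verify nondegeneracy by showing the second derivative at the new critical point is of order $(\ell-\bar\ell)^{(n-1)/n}$. Your packaging via $\phi(r)=r^3V'(r)$ (so that critical points are level sets $\phi=\ell$ and $\partial_r^2\Ve|_{\rm crit}=\phi'/r^3$) is a clean notational device that absorbs the $r^{-3}$ factors the paper carries explicitly, but the underlying argument is the same.
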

\proof To fix ideas assume that $\bar r$ is a maximum. Of
course the theorem holds with $n=1$ if the maximum is nondegenerate. So,
assume it is degenerate. Then, since the function $\Ve(.,\bar \ell)$
is nontrivial there exists an odd $n>2$, s.t.  $\partial_r^{n+1}\Ve(\bar
r,\bar\ell)=a\not=0$. Thus we look for
$\delta=\delta(\xi)$ solving 
\begin{equation}
\label{impli}
F(\delta,\xi)\coloneqq\partial_r\Ve(\bar r+\delta,\bar
\ell+\xi^n)=\left[V'(\bar r+\delta)-\frac{\bar\ell}{(\bar
    r+\delta)^3}\right]-\frac{\xi^n}{(\bar
    r+\delta)^3 }=0\ .
\end{equation}
It is convenient to rewrite the square bracket as  
$$
\frac{a}{n!}\delta^n+R_0(\delta)\ ,
$$ where $R_0$ is an analytic function with a zero of order at least
$n+1$ at the origin. A short computation shows that we can rewrite~\eqref{impli} in the form
\begin{equation}
\label{impli.1}
\delta\left[\left(\frac{a}{n!}+\frac{R_0(\delta)}{\delta^n}\right)(\bar
  r+\delta)^3\right]^{1/n}=\xi\ ,
\end{equation}
which is in a form suitable for the application of the implicit
function theorem. Thus it admits a solution $\delta(\xi)$ which is
analytic and which has the form
\begin{equation*}
\delta(\xi)=\left(\frac{a}{n!}\bar r^3\right)^{-1/n}\xi+{\cal O}(\xi^2)\ .
\end{equation*}

It remains to show that for $\xi$ different from zero (small) the
critical point just constructed is nondegenerate. To this end we
compute the derivative with respect to $\delta$ of $F$
(c.f. equation~\eqref{impli}); we get
\begin{align}
\label{nondeg}
\partial_\delta F(\delta(\xi),\xi)&=\frac{a}{(n-1)!}[\delta(\xi)]^{n-1}+
R_0'(\delta(\xi))  +\frac{3\xi^n}{(\bar r+\delta)^4}
\\
\label{nondeg.1}
&=
\frac{a}{(n-1)!}\left(\frac{n!}{a\bar
  r^3}\right)^{\frac{n-1}{n}}\xi^{n-1} +{\cal O}(\xi^n)\ ,
\end{align}
which for small $\xi$ is non vanishing. \qed

\begin{lemma}
Let $\bar r$ be a degenerate critical point of $\Ve(.;\bar\ell)$ which
is neither a maximum nor a minimum. Then for $\ell$ in a neighborhood
of $\bar\ell$, the effective potential $\Ve(.;\ell)$ either has no
critical points in a neighborhood of $\bar r$, or it has a
nondegenerate maximum and a nondegenerate minimum which depend
smoothly on $\ell$.
\end{lemma}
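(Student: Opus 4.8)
The plan is to reduce everything to the study of the single analytic function $g(r)\coloneqq r^3V'(r)$, whose behaviour encodes both the location and the nature of the critical points of $\Ve$. Since $\partial_r\Ve(r;\ell)=V'(r)-\ell/r^3=r^{-3}\bigl(g(r)-\ell\bigr)$, the critical points of $\Ve(\cdot\,;\ell)$ are exactly the solutions of $g(r)=\ell$, and the sign of $\partial_r\Ve$ is the sign of $g-\ell$. Moreover a direct computation gives, at any critical point, $\partial_r^2\Ve=g'(r)/r^3$; since $r>0$ this shows that such a critical point is a nondegenerate minimum when $g'(r)>0$, a nondegenerate maximum when $g'(r)<0$, and degenerate precisely when $g'(r)=0$. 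This dictionary is the backbone of the argument.

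First I would translate the three hypotheses into statements about $g$. The critical point condition reads $g(\bar r)=\bar\ell$; degeneracy reads $g'(\bar r)=0$; and, because $\partial_r\Ve(\cdot\,;\bar\ell)$ has the sign of $g-\bar\ell$, the assumption that $\bar r$ is neither a maximum nor a minimum of $\Ve$ means that $g-\bar\ell$ does not change sign at $\bar r$, i.e. $\bar r$ is a local extremum of $g$. Here I would invoke Remark~\ref{r.act.1}: were $g$ constant on a neighbourhood of $\bar r$, analyticity would force $g\equiv\bar\ell$, hence $V(r)=-\bar\ell/(2r^2)+\mathrm{const}$, which is excluded. Therefore $g-\bar\ell$ is analytic, nonconstant, and vanishes at $\bar r$; being a local extremum there, its order of vanishing is an even integer $2m\ge2$, with $g(r)-\bar\ell=c\,(r-\bar r)^{2m}\bigl(1+o(1)\bigr)$, $c\neq0$.

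Second I would count and classify the nearby solutions of $g(r)=\ell$ for $\ell$ close to $\bar\ell$. From the leading term one checks that $g$ is strictly monotone on each side of $\bar r$ in a small punctured neighbourhood, since $g'(r)=2mc(r-\bar r)^{2m-1}+\cdots$ keeps a definite sign on each side. Consequently, for the sign of $\ell-\bar\ell$ on which $g-\ell$ can vanish (namely $\ell>\bar\ell$ if $c>0$, with $\bar r$ a local minimum of $g$, and $\ell<\bar\ell$ if $c<0$), the equation $g(r)=\ell$ has exactly two solutions $r_-(\ell)<\bar r<r_+(\ell)$, while for the opposite sign it has none in the neighbourhood. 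At $r_+$ and $r_-$ the quantity $g'$ takes opposite signs (because $(r-\bar r)^{2m-1}$ does), so by the dictionary above one of them is a nondegenerate minimum of $\Ve$ and the other a nondegenerate maximum; this is exactly the stated alternative. Finally, since $g'\neq0$ at each $r_\pm(\ell)$ for $\ell\neq\bar\ell$, the implicit function theorem applied to $g(r)=\ell$ yields that $r_\pm$ depend analytically, in particular smoothly, on $\ell$.

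I anticipate that the only genuinely delicate point is establishing that the order of vanishing of $g-\bar\ell$ at $\bar r$ is both finite and even, and then extracting from it the ``exactly two solutions / none'' dichotomy together with the opposite signs of $g'$ at $r_\pm$; once the leading-order expansion $g(r)-\bar\ell\sim c(r-\bar r)^{2m}$ is in hand, the remaining verifications are routine. It is worth noting that this is precisely the inflection case complementary to the preceding lemma: there $\partial_r\Ve$ had odd order at $\bar r$, corresponding to a genuine degenerate extremum of $\Ve$, whereas here the even order is exactly what produces the splitting into a nondegenerate maximum–minimum pair.
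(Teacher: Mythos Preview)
Your proof is correct and is essentially the same approach as the paper's, just organized around the auxiliary function $g(r)=r^3V'(r)$ rather than the Taylor expansion of $\partial_r\Ve$: the paper's key equation $\delta^n\bigl(\tfrac{a}{n!}+\tfrac{R_0(\delta)}{\delta^n}\bigr)(\bar r+\delta)^3=\ell-\bar\ell$ is precisely $g(\bar r+\delta)-\bar\ell=\ell-\bar\ell$, with the even $n$ there matching your $2m$. Your dictionary (critical points $\leftrightarrow g=\ell$; nondegeneracy $\leftrightarrow g'\neq 0$; sign of $g'$ $\leftrightarrow$ max/min) makes the argument more transparent, but the underlying computation, the invocation of Remark~\ref{r.act.1} to exclude constant $g$, and the conclusion via opposite signs of $g'$ at the two bifurcating roots coincide with the paper's reasoning.
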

\proof
 A procedure similar to that used to deduce the equation~\eqref{impli.1} leads to the equation 
\begin{equation*}
\delta^n\left(\frac{a}{n!}+\frac{R_0(\delta)}{\delta^n}\right)(\bar
  r+\delta)^3=\ell-\bar\ell\ ,
\end{equation*}
where {\it $n$ is now even} and the sign of $a$ is arbitrary. It is
thus clear that for $(\ell-\bar\ell)/a$ negative the critical point disappear.
When this quantity is positive then it is easy to see that two new
critical points bifurcate from $\bar r$. Using a computation similar
to that of equations~\eqref{nondeg} and~\eqref{nondeg.1} one sees that they are
a maximum and a minimum which are nondegenerate. \qed

\begin{corollary}
There exists a finite set $\cI_{s1}\subset\cI$ such that, $\forall p_{\theta}\in
\cI\setminus\cI_{s1}$ the effective potential $\Ve(.;p_{\theta}^2)$ has only
critical points which are nondegenerate extrema.
\end{corollary}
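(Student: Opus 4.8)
The plan is to reduce the statement to a finiteness property of the analytic function $g(r):=r^3V'(r)$. A value $p_\theta\in\cI$ is exceptional precisely when $\Ve(\cdot\,;p_\theta^2)$ possesses a degenerate critical point, i.e.\ some $r>0$ with $\partial_r\Ve=0$ and $\partial_r^2\Ve=0$. Writing $\ell=p_\theta^2$, the first condition reads $g(r)=\ell$, while, using $\ell=r^3V'(r)$ to eliminate the parameter, the second becomes $\partial_r^2\Ve=V''(r)+3\ell/r^4=g'(r)/r^3=0$, i.e.\ $g'(r)=0$. Hence degenerate critical points of $\Ve$ correspond exactly to critical points of $g$, and the exceptional values of $\ell$ are exactly the critical values of $g$ falling in the admissible range $\overline{\cI}^{\,2}$. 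The whole statement therefore follows once one shows that $g$ has only finitely many critical values in that bounded interval.

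First I would record that $g$ is genuinely non-degenerate as an analytic function. Indeed $g'(r)=r^2\bigl(3V'(r)+rV''(r)\bigr)$, and $g'\equiv0$ would force $V(r)=k_1/(2r^2)+k_2$, a form excluded by (H2) (see Remark~\ref{r.act.1}). Thus $g'\not\equiv0$, so by real-analyticity its zeros are isolated and finite in number on every compact subinterval of $(0,\infty)$. Moreover, the two preceding Lemmas describe the local picture at each such zero: a degenerate extremum of $\Ve$ turns into a nondegenerate extremum for all nearby $\ell\neq\bar\ell$, and a degenerate critical point which is neither a maximum nor a minimum either disappears or splits into a nondegenerate maximum and minimum. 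In particular each exceptional $\bar\ell$ is isolated in the parameter, and an interior accumulation of exceptional values is impossible: it would produce a sequence of distinct critical points of $g$ clustering at an interior point of $(0,\infty)$, contradicting analyticity. Consequently the only way the exceptional set could be infinite is by accumulation of critical points of $g$ at the endpoints $r\to0^+$ or $r\to+\infty$, with the associated critical values remaining in the bounded set $\overline{\cI}^{\,2}$.

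The main obstacle is precisely to exclude this boundary accumulation, and this is where hypotheses (H1)--(H3) and the definitions of $L_m,L_M$ enter. The plan is to use (H1), which fixes $\lim_{r\to0^+}r^2V(r)$ and hence the behaviour of $g$ near the origin, together with the definition $L_M=\sup\sqrt{g}$ capping the relevant range from above, to confine all critical points carrying an admissible critical value to a compact interval of $r$; then the previous step already yields finitely many of them. Assumption (H3), bounding the number of solutions of $g(r)=\ell$ for each fixed $\ell$, is the tool that controls the level sets throughout this argument. Taking square roots of the resulting finite set of critical values produces the finite set $\cI_{s1}\subset\cI$. Finally, for $p_\theta\in\cI\setminus\cI_{s1}$ the number $p_\theta^2$ is a regular value of $g$, so every critical point of $\Ve(\cdot\,;p_\theta^2)$ is nondegenerate, and a nondegenerate critical point of a one-dimensional function is automatically a nondegenerate extremum, which is the assertion. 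I expect the delicate point to be the endpoint analysis: (H3) by itself controls each single level set but not a priori the clustering of critical \emph{values} near the ends of the range, so one must combine it with the asymptotics coming from (H1) and from the definition of $L_M$.
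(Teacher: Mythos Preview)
Your reformulation through $g(r)=r^3V'(r)$ is correct: the pair of conditions $\partial_r\Ve=0$, $\partial_r^2\Ve=0$ is equivalent to $g(r)=\ell$ and $g'(r)=0$, so the exceptional values of $\ell=p_\theta^2$ are precisely the critical values of $g$ falling in $(L_m^2,L_M^2)$. This is a genuinely different and more explicit route than the paper's. The paper's proof is two sentences: the preceding Lemmas show that the exceptional values of $p_\theta$ are isolated, and compactness of $\overline{\cI}$ then gives finiteness. It does not introduce $g$, does not invoke (H1) or the endpoint structure of $\cI$, and performs no separate boundary analysis.

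Your proposal, however, has a genuine gap: the endpoint case you single out (zeros $r_k$ of $g'$ accumulating at $0$ or $\infty$ with $g(r_k)$ remaining in the admissible range) is explicitly left open, and without it finiteness is not established. Your sentence about ``interior accumulation of exceptional values'' also conflates accumulation in $\ell$ with accumulation in $r$: a sequence $\ell_k\to\bar\ell$ with $\bar\ell$ interior to $\cI^2$ could perfectly well come from $r_k\to0$ or $r_k\to\infty$, so that scenario is not yet excluded at the point where you declare interior accumulation impossible. In effect you have detected a subtlety that the paper's two-line argument passes over, built your whole plan around it, and then not closed it. If you want to follow the paper, simply cite the two Lemmas for isolation and invoke compactness of $\overline{\cI}$; if you want your own route via $g$ to stand on its own, the boundary analysis you flag as ``the delicate point'' must actually be carried out.
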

\proof Just remark that the values of $p_{\theta}$ for which  $\Ve(.;p_{\theta}^2)$ has
at least one degenerate critical point are isolated. Thus, due to the
compactness of $\overline{\cI}$ their number is finite. \qed

\begin{remark}
  \label{analytic}
The set $\cI\setminus\cI_{s1}$ is the union of finitely many open
intervals. The critical points of $\Ve$ are analytic functions of
$p_{\theta}^2$ in such intervals; furthermore they do not cross (at crossing
points their multiplicity would be greater then one, against
nondegeneracy). Therefore the number, the order and the nature of the
critical points is constant in each of the subintervals.
\end{remark}

The main structural result we need for the effective potential is the
following

\begin{lemma}
  \label{livelli}
There exists a finite set $\cI_{s}\subset\cI$ such that, $\forall p_{\theta}\in
\cI\setminus\cI_{s}$ the effective potential $\Ve(.;p_{\theta}^2)$ has only
critical points which are nondegenerate extrema and the critical
levels are all distinct. Furthermore each critical level does not
coincide with
\begin{equation*}
V^{\infty}\coloneqq\lim_{r\to\infty}V(r)\ .
\end{equation*}
\end{lemma}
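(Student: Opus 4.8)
My plan is to leverage the preceding Corollary together with Remark~\ref{analytic}, which already furnish a finite set $\cI_{s1}\subset\cI$ outside of which every critical point of $\Ve(\,\cdot\,;p_\theta^2)$ is a nondegenerate extremum; moreover $\cI\setminus\cI_{s1}$ is a finite union of open subintervals, on each of which the critical points $r_1(\ell),\dots,r_k(\ell)$ (with $\ell:=p_\theta^2$) are analytic in $\ell$, constant in number and nature, and pairwise never crossing. It thus remains only to delete finitely many further values of $p_\theta$ so as to make the critical levels pairwise distinct and to keep each of them away from $V^{\infty}$. The idea is to control the critical levels $c_i(\ell):=\Ve(r_i(\ell),\ell)$, which are analytic on each subinterval, through their derivative in $\ell$.

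The key step is an envelope-type computation. Since $r_i(\ell)$ is a critical point, $\partial_r\Ve(r_i,\ell)=0$, and therefore
\[
\frac{dc_i}{d\ell}=\partial_r\Ve(r_i,\ell)\,r_i'(\ell)+\partial_\ell\Ve(r_i,\ell)=\partial_\ell\Ve(r_i,\ell)=\frac{1}{2\,r_i(\ell)^2}\ .
\]
As $p_\theta>0$ throughout $\cI$, this derivative is strictly positive, so each $c_i$ is strictly increasing on every subinterval; in particular $c_i(\ell)=V^{\infty}$ for at most one $\ell$ per subinterval (and never, should $V^{\infty}$ be infinite, since the levels $c_i$ are finite). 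For a pair $i\neq j$ one has $\frac{d}{d\ell}\bigl(c_i-c_j\bigr)=\tfrac12\bigl(r_i^{-2}-r_j^{-2}\bigr)$, which never vanishes because $r_i(\ell)\neq r_j(\ell)$ on the whole subinterval; since the branches do not cross their order is preserved, this derivative has constant sign, and $c_i-c_j$ is strictly monotone, hence vanishes at most once. Collecting over the finitely many subintervals, branches $i$, and pairs $(i,j)$ --- each contributing at most one offending value of $\ell$, hence one value $p_\theta=\sqrt{\ell}>0$ --- and adjoining these points together with the endpoints of the subintervals to $\cI_{s1}$, one obtains a finite set $\cI_s$ with the desired properties.

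The only thing that must be excluded is the degenerate scenario in which two critical levels, or a critical level and $V^{\infty}$, coincide identically along an entire subinterval, for that would produce infinitely many bad values. This is exactly what the formula $dc_i/d\ell=1/(2r_i^2)$ rules out: an identical coincidence $c_i\equiv c_j$ would force equal derivatives and hence $r_i\equiv r_j$, contradicting the no-crossing property of Remark~\ref{analytic}, while $c_i\equiv V^{\infty}$ is impossible since $c_i$ has nonzero derivative. Establishing this envelope identity and invoking no-crossing is therefore the heart of the argument, everything else being the bookkeeping of finitely many removed points.
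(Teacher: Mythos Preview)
Your proof is correct and follows essentially the same route as the paper's: both hinge on the envelope identity $\dfrac{d}{d\ell}\Ve(r_i(\ell),\ell)=\dfrac{1}{2r_i(\ell)^2}$ together with the no-crossing of the branches $r_i$ (Remark~\ref{analytic}) to conclude that coincidences of critical levels, and coincidences with $V^{\infty}$, are isolated. Your version is in fact slightly sharper in that you observe $c_i-c_j$ is strictly monotone on each subinterval (hence vanishes at most once), whereas the paper only argues that the coincidence set is discrete and then invokes finiteness.
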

\proof First we restrict to $\cI\setminus\cI_{s1}$, so that all the
critical points of $\Ve$ are nondegenerate. We concentrate on one of
the open subintervals of $\cI\setminus\cI_{s1}$
(cf. Remark~\ref{analytic}). Let $\ell\coloneqq p_{\theta}^2$, and let $r(\ell)$ be a
critical point of $\Ve(.;\ell)$. Consider the corresponding critical
level
$\Ve(r(\ell),\ell)$ and compute
\begin{equation}
  \label{cri.1}
\frac{d}{d\ell}\Ve(r(\ell),\ell)=\frac{\partial r}{\partial
  \ell}\frac{\partial \Ve}{\partial r}(r(\ell),\ell)+\frac{\partial
  \Ve}{\partial \ell}=\frac{1}{2r^2} \ ,
\end{equation}
where we used the fact that $r(\ell)$ is critical, so that
$\frac{\partial \Ve}{\partial r}(r(\ell),\ell)=0$ and the explicit
expression of $\Ve$ as a function of $\ell$. Thus the derivative~\eqref{cri.1} depends on $r$ only. It follows that if two critical
levels coincide, then their derivatives with respect to $\ell$ are
different, and therefore they become different when $\ell$ is
changed. It follows that also the set of the values of $\ell $ for
which some critical levels coincide is formed by isolated points, and
therefore it is composed by at most a finite number of points in each
subinterval. Of course a similar argument applies to the comparison
with  $V^{\infty}$.\qed

We are now ready for the construction of action-angle variables (and
the proof of Lemma~\ref{az.ang}).
Consider one of the connected subintervals of $\cI\setminus\cI_s$
and denote it by $\widetilde{\cI}$. We distinguish two cases: (1) the effective
potential has no local maxima for $p_{\theta}\in\widetilde{\cI}$; (2) the effective
potential has at least one local maximum for $p_{\theta}\in\widetilde{\cI}$.

We start by the case (1). The second action is $I_2\coloneqq p_{\theta}$, while the
first one is given by
\begin{equation}
  \label{az.1}
I_1=G(E,I_2)\coloneqq\frac{1}{\pi}\int_{r_{min}}^{r_{max}}\!\!\!\sqrt{2(E-\Ve(r;I_2^2))}dr\ ,\quad
E\in (\Ve(r_0;I_2^2),V^{\infty})\ ,
\end{equation}
where $r_{min}$ and $r_{max}$ are the solutions of the equation
$E=\Ve(r;I_2^2)$ and $r_0$ is the minimum of the potential.
Correspondingly the action $I_1$ varies in
$(0,G(V^{\infty},I_2))$. Thus the domain $\widetilde{\cO}^{(2)}$ is
\begin{equation*}
\widetilde{\cO}^{(2)}\coloneqq\left\{(r,p_r,\theta,p_{\theta})\, :\, p_\theta\in\widetilde{\cI}\ ,
(r,p_r)\in \bigcup_{E\in (\Ve(r_0;I_2^2),V^{\infty})}\cL^{(2)}(p_{\theta}, V^{\infty} )
\right\}\ ,
\end{equation*}
and the actions vary in 
\begin{equation*}
\widetilde{\cA}\coloneqq\left\{(I_1,I_2)\, :\, I_2\in\widetilde{\cI}\ ,\ I_1\in
(0,G(V^{\infty},I_2))\right\}\ .
\end{equation*}
In this domain the Hamiltonian is computed by computing $E$ as a
function of $I_1,I_2$ by inverting the function $G$ defined in~\eqref{az.1}.

In this case we define the affine map $\widetilde L_{p_\theta}$ of
Lemma~\ref{az.ang} by
\begin{equation}
  \label{affine}
\widetilde L_{p_\theta}\te\coloneqq \left[V^{\infty}-
  \Ve(r_0(p_\theta^2);p_\theta^2) \right] \te+ \Ve(r_0(p_\theta^2);p_\theta^2)\ .
\end{equation}

We consider now the case where the effective potential has at least
one local maximum. In this case, in general, there are several
different domains which are described by action-angle coordinates. To
fix ideas consider the case where $\Ve(.;p_{\theta}^2)$ has exactly
two minima $r_{01}>r_{03}$ and one maximum $r_{02}$ fulfilling
$\Ve(r_{02};I_2^2)<V^{\infty}$ (as in Fig.~\ref{fig1}). Then the level
sets $\cL(p_\theta,E)$, for $\Ve(r_{03}; I_2^2)<E<\Ve(r_{02};I_2^2)$ have two
connected components, in each of which one can construct the action
variables exactly by the formula~\eqref{az.1} (with a suitable
redefinition of $r_{min}$ and ${r_{max}}$). 

Then there is a further domain in which the action $I_1$ can be defined;
it corresponds to the level sets with $ \Ve(r_{02};I_2^2)<E<V^{\infty} $
above the local maximum. In this domain the action
is still given by the formula~\eqref{az.1}.
It is clear that in more general situations only one more kind of
domains of the phase space can exists: namely domains in which both
the minimal energy and the maximal energy correspond to the energies
of local maxima of the effective potential, but in this case the
construction goes exactly in the same way. 

Summarizing we have that the following
\begin{lemma}
Each of the domains $\cO_{i,j}^{(2)}$ in which a system of
action-angle variables is defined is the union for $I_2$ in an open
interval, say $\cI_{j}$ of level sets of $\heff $. The infimum over
$\cO^{(2)}_{i,j}$ of the energy $\heff $ is either a nondegenerate
maximum or a nondegenerate minimum of the effective potential. The
corresponding value of the radius will be denoted by $r_{0j}=
r_{0j}(I^2_2)$ and depends analytically on $I_2\in\cI_{j}$.
\end{lemma}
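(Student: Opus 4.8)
The plan is to recognise that this lemma merely synthesises the explicit construction carried out in the preceding case analysis, so that the proof reduces to verifying each of its three assertions by appealing to facts already established for the effective potential. First I would fix attention on a connected subinterval $\cI_j$ of $\cI\setminus\cI_s$, on which, by Lemma~\ref{livelli}, all critical points of $\Ve(\,\cdot\,;p_\theta^2)$ are nondegenerate extrema with pairwise distinct critical levels, none of them equal to $V^\infty$.

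For the first assertion I would observe that, by the very definition of $\cO_{i,j}^{(2)}$ in Lemma~\ref{az.ang}, the domain consists of the points $(r,p_r,\theta,p_\theta)$ with $p_\theta\in\cI_j$ and $(r,p_r)$ lying in the connected component $\cL^{(2)}_i(p_\theta,L_{i,p_\theta}\te)$ for $\te\in(0,1)$. Fixing $I_2=p_\theta$, this is precisely a one-parameter family of level curves of $\heff$ indexed by the energy $E=L_{i,p_\theta}\te$ ranging over an open interval $(V_1,V_2)$ bounded by two consecutive critical levels; letting $I_2$ run over $\cI_j$ then exhibits $\cO_{i,j}^{(2)}$ as the claimed union of level sets of $\heff$.

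For the second assertion I would proceed case by case, following the construction. In case (1) the infimum of $\heff$ over the fibre at fixed $p_\theta$ equals $\Ve(r_0;p_\theta^2)$, the global minimum of the effective potential, which is nondegenerate since $p_\theta\in\cI\setminus\cI_s$. In case (2) every admissible domain has as its lower boundary a single critical level, which is either a local minimum (for the bowl-shaped domains lying below a local maximum, cf.\ the two components of $\cL^{(2)}(p_\theta,E)$ for $\Ve(r_{03};I_2^2)<E<\Ve(r_{02};I_2^2)$) or a local maximum (for the domains lying above, or between, local maxima). In either case Lemma~\ref{livelli} guarantees that this extremum is nondegenerate, while the distinctness of the critical levels ensures that each domain lies strictly between two consecutive critical levels, so that $\heff$ has no interior critical point other than the one at $(p_r,r)=(0,r_{0j})$ sitting on the boundary, which is precisely what property (iii) of Lemma~\ref{az.ang} requires.

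Finally, for the analyticity of $r_{0j}(I_2^2)$, I would invoke Remark~\ref{analytic}: on $\cI_j$ the critical points of $\Ve$ are analytic functions of $\ell=p_\theta^2$ and do not cross, so the particular critical branch $r_{0j}$ bounding $\cO_{i,j}^{(2)}$ is a well-defined analytic function of $I_2\in\cI_j$. The one point requiring genuine care --- and the main obstacle --- is the bookkeeping in case (2): one must check that the enumeration of domains is exhaustive in the general multi-well situation and that no domain arises whose infimum fails to be a single nondegenerate extremum. As the text already notes, only the two listed kinds of domain can occur, and this is ultimately forced by the distinctness of the critical levels supplied by Lemma~\ref{livelli}; carrying out this verification in full generality, rather than only for the double-well model of Fig.~\ref{fig1}, is the step I expect to demand the most attention.
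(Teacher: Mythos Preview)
Your proposal is correct and matches the paper's approach exactly: the paper introduces this lemma with the phrase ``Summarizing we have that the following'', treating it as an immediate synthesis of the preceding case analysis rather than giving a separate proof. Your verification of each assertion via Lemma~\ref{livelli}, Remark~\ref{analytic}, and the explicit case-by-case construction is precisely the intended justification, and is in fact more explicit than what the paper itself writes out.
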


\noindent{\it Proof of Lemma~\ref{az.ang}.} So we have constructed
action-angle variables in a domain excluding the following sets, which we
define $\cS^{(2)}$ and which are analytic hypersurfaces:
\begin{itemize}
\item[(1)]
  $\displaystyle{\left\{(r,p_r,\theta,p_{\theta})\ :\ p_\theta=0\right\}}$ \thinspace;
\item[(2)]
  $\displaystyle{\left\{(r,p_r,\theta,p_{\theta})\ :\ p_\theta\in\cI_s\right\}}$  \thinspace;
\item[(3)]
  $\displaystyle{\left\{(r,p_r,\theta,p_{\theta})\ :\  
V_{0j}(p_{\theta})=\hze(r,p_r,p_{\theta})\right\}}$\thinspace.
\end{itemize}
where we denoted 
\begin{equation*}
V_{0j}(I_2)\coloneqq\Ve(r_{0j}(I_2^2),I_2^2)=V(r_{0j})+\frac{
    r_{0j}V'(r_{0j})}{2}\ ,
\end{equation*}
the corresponding critical level.
\qed

\smallskip
\noindent {\it Proof of Lemma~\ref{az.ang.3}}. First we fix a couple
of indexes $i,j$ as in the Lemma~\ref{az.ang}, and, for $L\in\cI_i$ and
$E\in L_{i,L}(0,1)$, define
\begin{equation*}
\cL_{i,j}(L,E)\coloneqq\bigl\{\, (\bx,\bp)\ :\ L^2(\bx,\bp)=L^2\  \text{and}\  \hze (\bx,\bp)=E\,\bigr\}\ .
\end{equation*}
Then simply define $\cO_{i,j}\coloneqq\bigcup_{L\in \cI_j}\cL_{i,j}(L,E)$. By
the Lemma \ref{az.ang} it is clear that the assumptions of Theorem~1
in~\cite{Fas05} are fulfilled and the results on the structure of the
phase space holds. To get
that the Hamiltonian has the same form as in the planar case, just
remark that the whole phase-space can be covered using two systems of
polar coordinates with $z$-axis ($\theta=0$) not coinciding. Using any
one of the two systems, one can introduce explicitly, by the classical
procedure, action-angle variables which turn out to be $I_2=L$ and
$I_1$ which is the action of the Hamiltonian system with $1$ degree of
freedom and Hamiltonian $H(r,p_r,L^2)$. Thus, $I_1$ has exactly the
same expression as in the planar case, but with $p_\theta^2$ replaced
by $L^2$.

It follows that the Hamiltonian which is computed by inversion of the
formula for $I_1$ has the same functional form as in the planar case.
\qed

\section{Proof of Theorem~\ref{bert}.}\label{sec:4}

From Lemma~\ref{az.ang.3}, it follows that it is enough to
consider the planar case; so we now study such a case.

\subsection{The quasiconvexity condition}\label{sec:qc-burger}

Before going to the heart of the proof we give a couple of equivalent
forms of the quasiconvexity condition. Consider a Hamiltonian
$h\colon\cA\to\R$, with $\cA\subset \R^2$, with two degrees of freedom, in action
variables and define
$$
\omega_1 = \parder{h}{I_1}\ ,\quad
\omega_2 = \parder{h}{I_2}\ .
$$

It is well-known that that considering systems with two degrees of freedom
quasiconvexity is equivalent to the non vanishing of the Arnold
determinant (see, e.g.,~\cite{RCD}), namely
\begin{equation*}
\def\frac{\dfrac}
\Dscr=\det
\begin{pmatrix}
  \nparder{2}{h }{I} & \left(\parder{h }{I}\right)^T\ \\
  \parder{h }{I}  & 0
\end{pmatrix}\ ,
\end{equation*}
that explicitly reads
\begin{equation*}
\Dscr=-\frac{\partial^2 h }{\partial I_1^2}\omega_2^2+2\frac{\partial^2
 h }{\partial I_1\partial I_2}\omega_1\omega_2-\frac{\partial^2
  h }{\partial I_2^2}\omega_1^2\ .
\end{equation*}
Thus, rearranging the terms appearing in $\Dscr$, it is straightforward
to see that, if $\omega_2$ does not vanish, the condition $\Dscr=0$
can be written as a Burgers equation,
\begin{equation}\label{eq:burger}
\parder{\nu}{I_1}  = \nu \parder{\nu}{I_2} \ ,\quad
\hbox{with }\nu=\frac{\omega_1}{\omega_2}\ ,
\end{equation}
which is a form convenient for the study of $\cD$ close to a minimum
of the effective potential. It is easy to see that, provided $L\not=0$,
close to a minimum of the effective potential $\omega_2$ does not
vanish.

\subsection{Domains bounded below by a minimum}\label{sec:min}

In this section we concentrate on domains $\cO_{i,j}^{(2)}$ such that the
infimum of the energy $\hze $ at fixed $I_2$ is a minimum of the
effective potential. In particular the minimum $r_{0j}$ of the
effective potential is nondegenerate. In this section, since the
domain is fixed, we omit the index $j$ from the various
quantities. Thus $\cA$ will be the domain of the actions, $h$ the
Hamiltonian written in action variables, $r_0$ the minimum of the
effective potential and $V_0$ the corresponding value.

The main result of this section is the following
\begin{lemma}
  \label{min}
Let $\cO_{i,j}^{(2)}$ be a domain where the infimum of the effective
Hamiltonian at fixed $I_2$ is a nondegenerate minimum of the effective
potential. Assume that the Arnold determinant vanishes in an open
subset of $\cO_{i,j}^{(2)}$, then the potential is either Keplerian or
Harmonic.
\end{lemma}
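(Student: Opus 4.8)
The plan is to work in a neighbourhood of the nondegenerate minimum $r_0 = r_0(I_2^2)$ of the effective potential and to compute the Taylor expansion of the Hamiltonian $h(I_1,I_2)$ in powers of $I_1$ (the radial action), treating $I_2$ as a parameter. Since $r_0$ is a nondegenerate minimum of the one--dimensional effective potential $\Ve(\cdot;I_2^2)$, the Birkhoff normal form for the one-degree-of-freedom system $\heff(r,p_r,I_2^2)$ converges in a neighbourhood of the equilibrium, so the action–angle variables are genuinely analytic and $h(I_1,I_2)$ is analytic in $I_1$ near $I_1=0$. Concretely, I would shift $r = r_0 + \rho$, expand $\Ve(r_0+\rho;I_2^2)$ in powers of $\rho$, and use the standard formulas for the Birkhoff normal form to obtain
\begin{equation*}
h(I_1,I_2) = V_0(I_2) + \omega_0(I_2)\,I_1 + a_2(I_2)\,I_1^2 + a_3(I_2)\,I_1^3 + \cdots\ ,
\end{equation*}
where $\omega_0 = \sqrt{\partial_r^2\Ve(r_0;I_2^2)}$ is the linear frequency and the coefficients $a_k(I_2)$ are explicit (rational) functions of the derivatives $\partial_r^m\Ve(r_0;I_2^2)$ and of $r_0(I_2^2)$, all of which are in turn analytic functions of $I_2$.

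Next I would reformulate the vanishing of the Arnold determinant via the Burgers equation~\eqref{eq:burger}, $\partial_{I_1}\nu = \nu\,\partial_{I_2}\nu$ with $\nu = \omega_1/\omega_2$. Because the determinant $\Dscr$ is analytic in $(I_1,I_2)$ on $\cA$, if it vanishes on an open subset it vanishes identically on the whole domain, and hence so does the left-hand side minus the right-hand side of~\eqref{eq:burger}. The strategy is then to expand $\nu(I_1,I_2)$ in powers of $I_1$, substitute into the Burgers equation, and equate to zero the coefficient of each power of $I_1$. This yields a hierarchy of ordinary differential conditions on the coefficients $a_k(I_2)$, and therefore — after using the explicit dependence of the $a_k$ on the derivatives of $V$ at $r_0$ — a hierarchy of differential equations that the potential $V$ must satisfy. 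Following the outline in the introduction, the order-zero term of $\Dscr$ (equivalently the lowest-order obstruction in $I_1$) produces a fourth-order ODE for $V$, and carrying the expansion of $h$ up to order four in $I_1$ produces, at the next order, a sixth-order ODE. I would carry these computations out with a symbolic manipulator, exactly as the authors indicate.

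Finally, I would solve the resulting overdetermined system of differential equations for $V$. The key point is that the vanishing of $\Dscr$ on an open set forces \emph{all} the coefficients in the $I_1$-expansion of the Burgers residual to vanish, so $V$ must simultaneously satisfy the fourth-order and sixth-order equations. I would show that the only real analytic solutions common to both equations (and compatible with hypotheses (H1)–(H3)) are $V(r) = k r^2$ and $V(r) = -k/r$, i.e.\ the Harmonic and Keplerian potentials; this is the step the introduction attributes to a hint from Francesco De Vecchi on finding the common solutions of two differential equations.

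The main obstacle I expect is twofold. First, the bookkeeping of the Birkhoff normal form to fourth order in $I_1$ is heavy: the coefficient $a_3(I_2)$ (and hence the first-order term of $\Dscr$) involves up to the sixth derivative of $V$ at $r_0$, together with the implicit dependence of $r_0$ on $I_2$ through the critical-point equation $r^3V'(r) = I_2^2$; care is needed to differentiate correctly with respect to $I_2$, using $\partial_{I_2}$ applied to $V_0(I_2)=\Ve(r_0(I_2^2);I_2^2)$ and the relation~\eqref{cri.1}. Second, and more essentially, is the algebra of eliminating $V$ from the two differential constraints: one must verify that imposing both ODEs leaves precisely a two-parameter family, and then match it to the Harmonic and Keplerian forms while discarding the spurious solution $V(r)=k_1/(2r^2)+k_2$ ruled out by Remark~\ref{r.act.1} (hypothesis (H2)). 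Establishing rigorously that no other analytic solution survives — rather than merely checking the two known ones — is the crux of the argument.
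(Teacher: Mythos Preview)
Your proposal is correct and follows essentially the same approach as the paper: expand $h$ via the convergent Birkhoff normal form at the nondegenerate minimum, impose the Burgers form of $\Dscr=0$ order by order in $I_1$, and show by symbolic computation that the resulting overdetermined pair of ODEs for $V$ admits only the Keplerian and Harmonic solutions. The paper's one refinement you may find useful is to work with $g(r_0)=r_0V''(r_0)/V'(r_0)$ instead of $V$ directly---the two conditions then become ODEs of order two and four for $g$ that are independent of $V'$, and a systematic differentiate-and-eliminate procedure reduces them to a purely algebraic equation in $g$ whose only admissible constant solutions are $g=-2$ and $g=1$.
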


The rest of the section is devoted to the proof of such Lemma.

As explained in the introduction we exploit the remark that in
one-dimensional analytic systems the Birkhoff normal form converges in a
(complex) neighborhood of a nondegenerate minimum. This, together with
the essential uniqueness of the action variables in one-dimensional systems,
implies that, for any $I_2$, the Hamiltonian $h$, as a function of
$I_1$, extends to a complex analytic function in a neighborhood of
$I_1=0$ and that the expansion constructed through the one-dimensional
Birkhoff normal form is actually the expansion of $h(I_1,I_2)$ at
$I_1=0$. It follows that also $\cD$ extends to a complex analytic
function of $I_1$ in a neighborhood of 0.  Thus one has an expansion
\begin{align*}
h(I_1,I_2)  = h_0(I_2)  + h_1(I_2) I_1  + \ldots  +
h_n(I_2)I_1^n  +\ldots\ ,
\end{align*}
where the quantities $h_n$ can be in principle computed as functions
of the derivatives of $V$ at $r_0(I_2)$ and of $I_2$. 

Here we will proceed by an explicit construction using a symbolic
manipulator.  We recall that explicit computations of Birkhoff normal
forms have already been implemented numerically in different context
(see, e.g., \cite{Sansottera20131,2014-GioLocSan,2017-GioLocSan} in
the context of the $n$-body problem,
and~\cite{Sansottera2014,Sansottera2015} for the spin-orbit problem).
We refer to that works for the details about the implementation of the
Birkhoff normal form.  We only stress that here the main difference
consists in the fact that all the parameters are represented by
symbols and the coefficients are rational numbers: thus the Birkhoff
normal form is computed in exactly and without any numerical
approximation.

\begin{remark}
\label{r.min}
In
${\cO_{i,j}^{(2)}} $ there is a 1-1 correspondence between $I_2$ and
$r_0$, so each of the functions $h_n$ can be considered just as a
function of $r_0$ and of  the derivatives of $V$ at
$r_0$. Correspondingly the derivatives with
respect to $I_2$ can be converted into derivatives with
respect to $r_0$ through the rule
 \begin{equation}\label{eq:dIdr}
\parder{}{I_2} =
\frac{2}{(3+g(r_0)) \sqrt{r_0 V'(r_0)}}\,
\parder{}{r_0}\ ,
\end{equation}
where, following the notation introduced in \cite{RCD}, we have defined
\begin{equation}
\label{g}
g(r_0)\coloneqq\frac{{r_0} V''({r_0})}{V'({r_0})}\ .
\end{equation}
Furthermore, it is convenient to define
\begin{equation}
  \label{defR}
R(r_0,V'(r_0),g(r_0))\coloneqq\frac{2}{(3+g(r_0)) \sqrt{r_0 V'(r_0)}}\ .
\end{equation}
\end{remark}

\begin{remark}
The condition that the function $g$ is constant is equivalent to the fact that the potential is
homogeneous or logarithmic, precisely, one has
\begin{equation*}
  g(r_0)=c\quad \iff\quad
  \begin{cases}
 V(r) = \frac{k}{c+1} r^{c+1}\,,\ 
    k\in\reali\setminus \left\{0\right\}& \text{if} \ c\not=-1
    \\
 V(r) = k \ln r\,,\
    k\in\reali\setminus \left\{0\right\}& \text{if} \ c=-1
     \end{cases}
\end{equation*}
\end{remark}
Then one can use the expansion obtained through the Birkhoff normal form
to compute the expansion of $\nu\equiv\omega_1/\omega_2$ at the minimum, namely
\begin{align}
\label{expan.1}
\nu(I_1,I_2)  = \nu_{0}(I_2)  + \nu_{1}(I_2) I_1  + \ldots  + \nu_{n}(I_2)
I_1^n+\ldots\ .
\end{align}
\begin{remark}
\label{ordine}
Since the frequency is the derivative of the Hamiltonian with respect
to the action, in order to compute $\nu_j$ one has to know the
Birkhoff normal form at order $2(j+1)$ in the Cartesian variables,
which correspond to know the expansion of the Birkhoff normal form at
order $j+1$ in $I_1$. 
\end{remark}

The idea is now to impose that the Burgers equation~\eqref{eq:burger}
is satisfied up to the first order in $I_1$ identically as function
of $I_2$, namely to impose
\begin{align}\label{eq:nu1-2}
  \begin{aligned}
  \nu_{1} &= \nu_{0}\parder{\nu_{0}}{I_2}\ ,\\
  \nu_{2} &= \frac{1 }{2}\left(\nu_{1}\parder{\nu_{0}}{I_2} +\nu_{0}\parder{\nu_{1}}{I_2}\right)\ ,
\end{aligned}
\end{align}
and to consider such equations as the ones determining the
degenerate potentials. We will show that such equations admit the only
common solutions given by the Harmonic and the Keplerian potential. Of
course a degenerate Hamiltonian fulfills also higher order equations,
but they are not relevant for our result.

According to Remark~\ref{r.min}, we will consider all the
 $\nu_j$ as functions of $r_0$ instead of $I_2$ and convert
all the derivatives with respect to $I_2$ into derivatives with
respect to $r_0$ by using~\eqref{eq:dIdr}.

Finally, it is convenient to use, as much as possible, $g$ as an
independent variable (see~\eqref{g}) instead of $V$. We remark
that $V''({r_0})={g(r_0)V'({r_0})}/{r_0}$, which implies that
$\forall n\geq2$ the $n$-th derivative of the potential can be
expressed as a function of $r_0,
V'(r_0),g(r_0),g'(r_0),\ldots,g^{(n-2)}(r_0)$. We will systematically
do this.

There is a remarkable fact: writing explicitly the
equations~\eqref{eq:nu1-2}, it turns out that they are independent of
$V'$, thus they are only differential equations for $g$.

\bigskip
We report below the outline of the computations and the key
formul{\ae}.  The symbolic  manipulations have been implemented in
Mathematica\texttrademark and are available upon request to the
authors. 

\begin{remark}
\label{esatti}
Let us stress that in principle all the calculations could be done
without resorting to algebraic manipulation on a computer. We just
made use of symbolic manipulations because the expressions become soon
too cumbersome. We emphasize that no approximations are involved in
the manipulations, since all the coefficients are rational numbers.
\end{remark}

Firstly we computed explicitly $\nu_0$, $\nu_1$ and $\nu_2$, defined
by~\eqref{expan.1}, getting
$$
\begin{aligned}
  \nu_0&=\sqrt{3+g(r_0)}\ ,
  \\
\nu_1&=\nu_1(r_0, V'(r_0), g(r_0), g'(r_0),g''(r_0))\ ,\\
\nu_2&=\nu_2(r_0, V'(r_0), g(r_0), g'(r_0),g''(r_0),g^{(3)}(r_0),g^{(4)}(r_0))\ .\\
\end{aligned}
$$
The explicit forms of $\nu_1$ and $\nu_2$ are rather long and are
reported in the Appendix~\ref{app:conti}.

Secondly we can use the explicit forms of the functions $\nu_0$ and
$\nu_1$ so as to compute the r.h.s. of~\eqref{eq:nu1-2}, which will have
the form
\begin{align*}
R \nu_{0}\parder{\nu_{0}}{r_0} &=: G_1(r_0, V'(r_0), g(r_0), g'(r_0))\ ,
\\
\frac{1}{2}R \left(\nu_{1}\parder{\nu_{0}}{r_0}+\nu_{0}\parder{\nu_{1}}{r_0}\right) &=: G_2(r_0, V'(r_0), g(r_0), g'(r_0), g''(r_0), g^{(3)}(r_0))\ ,
\end{align*}
where $R$ is the expression defined in~\eqref{defR}.

Lastly, imposing $\nu_1=G_1$ and $\nu_2=G_2$, we get two differential
equations for $g$ that we are going to solve. Let us stress that the
first one is exactly the equation for $g$ appearing in Remark~3
of~\cite{RCD}.

The strategy, in order to find the common solutions, consists in taking
derivatives of the equation of lower order until one gets two
equations of the same order (forth order in $g$ in our case), then
one solves one of the equations for the higher order derivative and
substitutes it in the other one, thus getting an equation of order
smaller then the previous one; then one iterates. In our case the
final equation will be an algebraic equation for $g$, whose solutions
are just constants. The value of such constants correspond to the
Kepler and the Harmonic potentials, so the conclusion will hold.
\vskip10pt

In detail, we solve $\nu_1 = G_1$ for $g''(r_0)$ and $\nu_2=G_2$ for
$g^{(4)}(r_0)$, getting
\begin{equation}\label{eq:f2f4}
\begin{aligned}
g''(r_0)&=f_2(r_0, g(r_0), g'(r_0))\ ,\\
g^{(4)}(r_0)&=f_4(r_0, g(r_0), g'(r_0),g^{(3)}(r_0))\ ,
\end{aligned}
\end{equation}
where, in the second one, we also used $f_2(r_0, g(r_0), g'(r_0))$
to remove the dependence of $g^{(4)}$ on $g''(r_0)$. A similar
procedure will be done systematically. 

Starting from~\eqref{eq:f2f4}, we compute
$$
\ndiff{2}{f_2}{r_0} = F_4(r_0, g(r_0), g'(r_0), g^{(3)}(r_0))\ ,
$$
and solve the equation $F_4(r_0, g(r_0), g'(r_0), g^{(3)}(r_0)) = f_4(r_0, g(r_0), g'(r_0), g^{(3)}(r_0))$ for $g^{(3)}$, getting
$$
g^{(3)} = f_3(r_0, g(r_0), g'(r_0))\ .
$$
Starting again from~\eqref{eq:f2f4}, we compute
$$
\diff{f_2}{r_0} = F_3(r_0, g(r_0), g'(r_0))\ ,
$$
and solve the equation $F_3(r_0, g(r_0), g'(r_0)) = f_3(r_0, g(r_0), g'(r_0))$ for $g'$ getting
$$
g' = f_1(r_0, g(r_0))\ .
$$
Finally we compute
$$
\diff{f_1}{r_0} = F_2(r_0, g(r_0))\ ,
$$ and solve $F_2(r_0, g(r_0)) = f_2(r_0, g(r_0))$ for $g$. It is
remarkable that such an equation turns out to be independent of $r_0$,
so that the solutions for $g(r_0)$ are just isolated points, namely
constants. In~\cite{RCD} it was already shown that the only constants
solving $\nu_1=G_1$ are $-3,-2$ and $1$. The value $-3$ is excluded
according to Remark~\ref{r.act.1}, therefore the only remaining
potentials are the Keplerian and the Harmonic ones. This concludes the
proof of Lemma~\ref{min}.

\subsection{Domains bounded below by a maximum}\label{sec:max}

Consider now domains $\cO_{i,j}^{(2)}$ s.t. the infimum of the energy $H$ at a
fixed $I_2\in\widetilde{\cI}$ is a nondegenerate maximum of the effective
potential $\Ve$. Denote by $V_0=V_0(I_2)$ the value of the effective
potential at such a maximum delimiting from below the range of the
energy in $\cO_{i,j}^{(2)}$. 

\begin{lemma}
  \label{max}
Let $\cO_{i,j}^{(2)}$ be a domain s.t. the infimum of the effective Hamiltonian
at fixed $I_2$ is a nondegenerate maximum of the effective
potential, then the Arnold determinant vanishes in $\cO_{i,j}^{(2)}$ at most on an
analytic hypersurface.
\end{lemma}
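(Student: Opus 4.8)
The plan is to prove that the Arnold determinant $\Dscr$ is a \emph{nontrivial} analytic function on $\cO_{i,j}^{(2)}$; the conclusion then follows at once. Indeed, by the Remark after Lemma~\ref{az.ang} the Hamiltonian $h_{i,j}$, and hence $\Dscr$, is analytic on the connected action domain $\cA_{i,j}$, and the zero locus of a nontrivial real-analytic function of two variables on a connected open set is an analytic hypersurface. To obtain nontriviality I would show that $\Dscr$ \emph{diverges} as one approaches the boundary of $\cA_{i,j}$ corresponding to the energy level $E=V_0$ of the nondegenerate maximum that bounds $\cO_{i,j}^{(2)}$ from below (the separatrix): there the orbit tends to the homoclinic loop through the saddle, the action $I_1$ tends to a finite value $I_1^{\ast}(I_2)$, while the period diverges logarithmically, and it is exactly this singularity that prevents $\Dscr\equiv0$.

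First I would make the singular structure of $I_1$ explicit. At fixed $I_2=p_\theta^2$ the radial Hamiltonian $p_r^2/2+\Ve(r;I_2^2)$ has a nondegenerate hyperbolic equilibrium at $r_{0j}$; applying the normal form for one-dimensional Hamiltonians at a saddle point of~\cite{Gio01} one introduces convergent coordinates in which the Hamiltonian is an analytic function of the product of the two normal variables, and computing the action integral in these coordinates yields, with $u:=E-V_0$,
\begin{equation*}
I_1(E,I_2)=I_1^{\ast}(I_2)+a(I_2)\,u+b(I_2)\,u\log u+\cO(u^2\log u)\ ,\qquad b(I_2)\neq0\ ,
\end{equation*}
all coefficients being analytic in $I_2$; this is the formula~\eqref{I1finale} of Neishtadt~\cite{neishtadt1987change}. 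Inverting gives $\partial I_1/\partial E\sim -c(I_2)\log u\to+\infty$ with $c(I_2)=-b(I_2)>0$, so that $\omega_1=\partial h/\partial I_1\to0^+$, and one further differentiation with respect to $I_1$ produces the key divergence
\begin{equation*}
\frac{\partial^2 h}{\partial I_1^2}\sim\frac{-1}{c(I_2)^2\,u\,(\log u)^3}\longrightarrow+\infty\qquad(E\to V_0^+)\ .
\end{equation*}

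It then remains to insert these asymptotics into $\Dscr=-\tfrac{\partial^2 h}{\partial I_1^2}\,\omega_2^2+2\tfrac{\partial^2 h}{\partial I_1\partial I_2}\,\omega_1\omega_2-\tfrac{\partial^2 h}{\partial I_2^2}\,\omega_1^2$. A short implicit-differentiation argument shows that, at fixed $I_1$, $\omega_2\to V_0'(I_2)=I_2/r_{0j}^2$ as $E\to V_0^+$, which is finite and nonzero because $p_\theta\neq0$ on $\cO_{i,j}^{(2)}$ (cf.~\eqref{cri.1}). Since every term of $\Dscr$ other than the first carries a factor $\omega_1$ or $\omega_1^2$ tending to zero, the first term dominates and $\Dscr\sim-\omega_2^2\,\partial^2 h/\partial I_1^2\to-\infty$, whence $\Dscr\not\equiv0$ and the lemma follows. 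The hard part will be the last balancing step: the mixed derivative $\partial^2 h/\partial I_1\partial I_2$ diverges at the \emph{same} rate as $\partial^2 h/\partial I_1^2$ (differentiation in $I_2$ acts through $(I_1^{\ast})'(I_2)$ via the same logarithmic mechanism), so one must verify that multiplying it by the vanishing factor $\omega_1\sim|\log u|^{-1}$ really makes the cross term subdominant and that no cancellation of the leading singular term can occur; establishing the normal-form expansion with a controlled remainder and tracking the powers of $\log u$ through the inversion $E\leftrightarrow I_1$ is the technical core.
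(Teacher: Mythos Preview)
Your proposal is correct and follows essentially the same route as the paper: use Giorgilli's normal form at the saddle to obtain the asymptotic $I_1=G(\bar E,I_2)=-\Lambda(\bar E,I_2)\ln\bar E+G_1(\bar E,I_2)$ (the paper's Theorem~\ref{formaI1}), then show $\Dscr$ diverges as $\bar E\to0^+$, hence is a nontrivial analytic function on $\cA_{i,j}$.

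The one tactical difference worth knowing is that the paper sidesteps what you correctly flag as ``the hard part'' by first rewriting $\Dscr$ entirely in the variables $(\bar E,I_2)$: setting $\cW_1:=(\partial_{\bar E}G)^{-1}$ and $\cW_2:=-\cW_1\,\partial_{I_2}G+\partial_{I_2}V_0$, a short algebraic manipulation (the paper's Proposition~5.1) yields the identity
\[
\Dscr=-\cW_1\,\partial_{\bar E}\cW_1\,(\partial_{I_2}V_0)^2+2\cW_1\,\partial_{I_2}\cW_1\,\partial_{I_2}V_0+\cW_1^3\,\partial^2_{I_2}G-\cW_1^2\,\partial^2_{I_2}V_0\ ,
\]
where the $I_2$-derivatives on the right are \emph{partial} derivatives at fixed $\bar E$, not derivatives along constant $I_1$. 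In this form every term except the first is manifestly $o(1)$ (since $\cW_1\to0$, $\partial_{I_2}\cW_1\to0$, and $\partial^2_{I_2}G$ stays bounded), while the first behaves like $(\pi\lambda)^2(\partial_{I_2}V_0)^2/(\bar E\ln^3\bar E)$. The potentially dangerous contributions from $\partial^2 h/\partial I_1\partial I_2$ and $\partial^2 h/\partial I_2^2$ that you worry about have already been cancelled by the identity, so no careful tracking of $\log u$-powers through the inversion $E\leftrightarrow I_1$ is needed. Your direct approach would also succeed, but this reformulation is the cleaner way to close the argument.
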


The rest of the section is devoted to the proof of such a lemma. The
main tool for studying the limiting behavior of the action close to
the maximum $V_0$ is the following normal form theorem, which is a
slight reformulation of a simplified version of the main result
in~\cite{Gio01}.

\begin{theorem}\label{NForm}
  Let
$$W(r)=W_0 -\frac{\lambda^2}{2}r^2+\mathcal{O}(r^3)$$ be an analytic
  potential having a nondegenerate maximum at $r=0$; consider the 
  Hamiltonian
$$H(r,p_r)=\frac{p_r^2}{2}+W(r)$$ then, there exists an open
  neighborhood $\cV_0$ of $0$ and a near to identity canonical
  transformation $\Phi: \cV_0\ni(x,y)\mapsto
  (r,p_r)\in\cU_0\coloneqq\Phi(\cV_0)$ of the form
\begin{equation*}
\left\{
\begin{aligned}
r&=\frac{x}{\sqrt{\lambda}}+ f_1(x,y)\\
p_r&=\sqrt{\lambda}y+f_2(x,y)
\end{aligned}
\right.
\end{equation*}
with $f_1,f_2$ analytic functions which are at least quadratic in
$x,y$ and such that the Hamiltonian takes
the form
\begin{equation*}
h(x,y)=W_0+\lambda J + \sum_{i\geq 1} \lambda_i J^{i+1} \ ,\qquad
J\coloneqq\frac{y^2-x^2}{2}\ .
\end{equation*}
Furthermore, the series is convergent in $\mathcal{V}_0$.
\end{theorem}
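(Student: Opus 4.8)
The plan is to split the statement into a purely formal normalization, which is routine, and an analytic convergence statement, which is the real content and which I would import from \cite{Gio01}. First I would normalize the quadratic part by the linear map $r=x/\sqrt{\lambda}$, $p_r=\sqrt{\lambda}\,y$. This is symplectic, since $dr\wedge dp_r=dx\wedge dy$, and it carries the quadratic part $\frac{p_r^2}{2}-\frac{\lambda^2}{2}r^2$ of $H$ into $\lambda\,\frac{y^2-x^2}{2}=\lambda J$. Thus $H$ becomes $W_0+\lambda J$ plus terms of degree at least three in $(x,y)$, which already matches the linear part of the claimed $\Phi$, the subsequent near-to-identity normalization contributing only the higher corrections $f_1,f_2$.

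Next I would set up the Birkhoff--Lie normalization with homological operator $\{\,\cdot\,,\lambda J\}$. Passing to the characteristic (hyperbolic) coordinates $u=x+y$, $v=x-y$, for which $J=-uv/2$, a direct computation gives $\{u^a v^b,\lambda J\}=\lambda(a-b)\,u^a v^b$, so this operator is diagonal on monomials. Its kernel is spanned by the powers of $uv$, i.e.\ by the powers of $J$, while its image contains every monomial with $a\neq b$. Removing order by order the components outside the kernel by near-to-identity Lie transforms, the Hamiltonian is therefore formally conjugated to a series in $J$ alone, namely $W_0+\lambda J+\sum_{i\geq 1}\lambda_i J^{i+1}$. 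Note that all odd-degree terms coming from the $\mathcal{O}(r^3)$ part of $W$ automatically satisfy $a\neq b$ and are eliminated, and that, since each Lie step is near-identity, the accumulated $f_1,f_2$ are at least quadratic, as required.

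The main obstacle is convergence: one must show that the normalizing transformation and the resulting series actually converge in a full neighborhood $\cV_0$ of the origin, rather than merely as formal objects. The decisive feature is that the eigenvalues $\lambda(a-b)$ of the homological operator satisfy $|\lambda(a-b)|\geq\lambda$ whenever $a\neq b$, so that no small denominators ever arise; in the one--degree--of--freedom setting this absence of small denominators is known to be sufficient to obtain an analytic, and not merely formal, normalization. This is precisely the content of the theorem proved by Giorgilli in \cite{Gio01}, and I would invoke it directly. The only adaptation needed is a reformulation of his statement: I specialize it to the mechanical Hamiltonian $p_r^2/2+W(r)$ and fix the normalization so that the hyperbolic exponent equals $\lambda$ and the normal form is expressed through the invariant $J=(y^2-x^2)/2$, which yields exactly the form asserted in the theorem.
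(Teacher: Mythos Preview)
Your proposal is correct and matches the paper's treatment: the paper does not give a proof of this theorem at all but simply presents it as ``a slight reformulation of a simplified version of the main result in~\cite{Gio01}.'' Your sketch of the formal normalization (linear rescaling, homological operator diagonal on $u^a v^b$ with nonzero eigenvalues off the kernel spanned by powers of $J$) followed by invoking~\cite{Gio01} for convergence is exactly in that spirit, just with more detail spelled out than the paper provides.
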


The behavior of the action variable close to the maximum of the effective
potential is described by the following 

\begin{theorem}\label{formaI1}
Let $\bar E=E-V_0(I_2)$, then there exist two functions $\Lambda (\bar
E,I_2)$, $G_1\left(\bar E,I_2\right)$ such that the first action $I_1$
is given by
\begin{equation}\label{I1finale}
I_1=G(\bar E,I_2)\coloneqq-\Lambda(\bar E,I_2)\ln\bar E + G_1\left(\bar E,I_2\right)\ .
\end{equation}
Furthermore $\Lambda$ and $G_1$ are bounded in the
domain
\begin{equation}
  \label{dom.max}
\left\{(\bar E,I_2)\ :\ I_2\in\widetilde\cI\ ,\ \bar E\in [0, V_M(I_2)-V_0(I_2))
  \right\}\ ,
\end{equation}
where $V_M(I_2)$ is the maximal value of the
energy at fixed $I_2$ in $\cO_{i,j}^{(2)}$ and analytic in its interior.
Finally one has 
\begin{equation*}
\Lambda(\bar E,I_2) \coloneqq\frac{\bar E+\mathcal{F}(\bar E, I_2)}{\pi\lambda(I_2)}\ ,
\end{equation*}
with $\mathcal{F}$ having a zero of order $2$ in $(0,I_2)$ and
$\lambda^2=\lambda^2(I_2)\coloneqq-\frac{d^2 \Ve}{dr^2}(r_0)>0$.
\end{theorem}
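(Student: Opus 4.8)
The plan is to reduce the whole statement to the analysis of the \emph{period} $T$ of the radial oscillation through the identity $\partial I_1/\partial E=T/(2\pi)$, and to read off the logarithmic singularity of $T$ as $\bar E\to0^+$ from the convergent normal form of Theorem~\ref{NForm}. First I would apply that theorem to the one-dimensional Hamiltonian $\heff(\cdot,\cdot,I_2)$ at the nondegenerate maximum $r_0=r_0(I_2)$, obtaining on a fixed neighbourhood $\cV_0$ symplectic coordinates $(x,y)$ in which $\heff=V_0(I_2)+g(J)$, with $J=(y^2-x^2)/2$ and $g(J)=\lambda J+\sum_{i\ge1}\lambda_i J^{i+1}$ convergent and $\lambda=\lambda(I_2)$. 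Since $\Ve$, the maximum $r_0(I_2)$ and $\lambda(I_2)$ depend analytically on $I_2$ (Remark~\ref{analytic}), the whole construction, and in particular the analytic inverse $\mathcal{J}:=g^{-1}$ with $\mathcal{J}(\bar E)=\bar E/\lambda+\mathcal{O}(\bar E^2)$, is analytic in $I_2$ as well, and $\cV_0$ may be chosen uniform for $I_2$ in any compact subinterval of $\widetilde\cI$.

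Next I would write $T=2\int_{r_{min}}^{r_{max}}dr/\sqrt{2(E-\Ve)}$ and split the integral at two fixed abscissae $r_-<r_0<r_+$ lying in $\Phi(\cV_0)$. The two outer pieces run from the turning points $r_{min}(\bar E,I_2),r_{max}(\bar E,I_2)$ to $r_\mp$; these turning points stay bounded away from $r_0$ and are nondegenerate even at $\bar E=0$, because by Lemma~\ref{livelli} no other critical level of $\Ve$ equals $V_0$, so $\Ve'$ does not vanish there. By the standard turning-point substitution the outer pieces are therefore analytic in $(\bar E,I_2)$ up to and including $\bar E=0$, and feed only the regular part. For the inner piece I would use that the flow is conjugated by the canonical map $\Phi$, so the time spent in $\cV_0$ may be computed in the $(x,y)$ variables: on the level $J=\mathcal{J}(\bar E)$ one has $\dot x=g'(J)\,y$ with $y=\sqrt{x^2+2\mathcal{J}}$, whence a single crossing of the barrier takes time $\frac{2}{g'(\mathcal{J})}\,\operatorname{arcsinh}\!\bigl(a/\sqrt{2\mathcal{J}}\bigr)$, where $a$ is the fixed $x$-half-width of $\cV_0$.

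The crux is then the expansion $\operatorname{arcsinh}\bigl(a/\sqrt{2\mathcal{J}}\bigr)=-\tfrac12\ln\bar E+(\text{analytic in }\bar E)$. This follows by writing $\operatorname{arcsinh}(u)=\ln(u+\sqrt{u^2+1})$, which splits the expression into $\ln\bigl(a+\sqrt{a^2+2\mathcal{J}}\,\bigr)$, analytic in $\mathcal{J}$, and $-\tfrac12\ln(2\mathcal{J})$; since $\mathcal{J}(\bar E)/\bar E$ is analytic and nonzero at the origin one has $\ln\mathcal{J}=\ln\bar E+(\text{analytic})$. It is precisely here that the \emph{convergence} of the normal form is indispensable: it guarantees that the singular part of $T$ is exactly $(\text{analytic})\cdot\ln\bar E$, with no higher powers of the logarithm. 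Accounting for the two crossings per period gives $T=-\frac{2}{g'(\mathcal{J})}\ln\bar E+(\text{analytic})$, hence $\partial I_1/\partial E=-\frac{1}{\pi g'(\mathcal{J})}\ln\bar E+(\text{analytic})$.

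Finally I would integrate in $\bar E$ from $0$. Using $1/g'(\mathcal{J})=\mathcal{J}'$ and integrating the logarithmic term by parts, the boundary contribution $\mathcal{J}(\bar E)\ln\bar E$ survives (the endpoint at $0$ vanishes since $\mathcal{J}(s)\ln s\to0$) while $\int_0^{\bar E}\mathcal{J}(s)/s\,ds$ is analytic because $\mathcal{J}(s)/s$ is; this yields $I_1=-\frac{\mathcal{J}(\bar E)}{\pi}\ln\bar E+G_1(\bar E,I_2)$ with $G_1$ analytic and bounded. Setting $\Lambda=\mathcal{J}/\pi$ and $\mathcal{F}=\lambda\mathcal{J}(\bar E)-\bar E=-\sum_{i\ge1}\lambda_i\mathcal{J}^{i+1}$ gives $\Lambda=(\bar E+\mathcal{F})/(\pi\lambda)$ with $\mathcal{F}=\mathcal{O}(\bar E^2)$, i.e.\ a zero of order two at $(0,I_2)$; for $\bar E$ bounded away from $0$ the action $I_1$ is plainly analytic, so $\Lambda$ and $G_1$ extend to bounded analytic functions on the whole domain~\eqref{dom.max}. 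Beyond the bookkeeping, the main obstacle is to perform all of this \emph{analytically and uniformly in the parameter $I_2$}: one must check that Giorgilli's neighbourhood, the coefficients $\lambda(I_2),\lambda_i(I_2)$ and the matching at $r_\pm$ can be organised analytically in $I_2$, which ultimately rests on the analytic dependence of the nondegenerate maximum on $I_2$ afforded by Remark~\ref{analytic}.
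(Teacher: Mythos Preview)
Your approach is correct and reaches the same formula as the paper, but by a genuinely different route. The paper computes the action $I_1=\frac{1}{\pi}\int_{\gamma^+(E)}p_r\,dr$ \emph{directly}: after splitting off an analytic outer contribution, it uses that the normal-form map $\Phi$ is canonical to write $p_r\,dr=y\,dx+dS$ with $S$ an analytic generating function, and then evaluates $\int_{-x_1}^{x_1}y\,dx=\int_{-x_1}^{x_1}\sqrt{x^2+2\mathcal{J}}\,dx$ in closed form, from which the term $-\mathcal{J}\ln\bar E$ drops out in one stroke. You instead compute the \emph{period} $T=2\pi\,\partial_E I_1$ via the time integral, exploit that elapsed time is invariant under the conjugacy $\Phi$ (so no generating function is needed), obtain the singular coefficient $-2/g'(\mathcal{J})=-2\mathcal{J}'$ from the explicit $\operatorname{arcsinh}$ primitive of $1/\sqrt{x^2+2\mathcal{J}}$, and then recover $I_1$ by one integration in $\bar E$ together with an integration by parts. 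Both arguments rest on the same convergent normal form and on essentially the same elementary antiderivative; the paper's version is shorter by one integration, while yours trades the generating-function device for the conceptually cleaner invariance of transit time under canonical conjugacy.

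One point to tidy up: you first split the period integral at \emph{fixed} abscissae $r_-<r_0<r_+$, but then compute the inner time over the region $|x|\le a$ in normal-form coordinates; these two regions do not coincide, since the image under $\Phi$ of the strip $\{|x|\le a\}$ is not a vertical strip in $(r,p_r)$. Either choice works, but you must be consistent. If you keep $r_\pm$ fixed, the corresponding $(x,y)$-endpoints $x_\pm(\bar E)$ depend analytically on $\bar E$ down to $\bar E=0$, and the $\operatorname{arcsinh}$ computation with variable limits still isolates the same singular part $-\tfrac12\ln\bar E$ per crossing. If instead you fix $|x|\le a$, then the outer time integrals have $\bar E$-dependent endpoints in $r$ (the images of $(\pm a,\sqrt{a^2+2\mathcal{J}})$ under $\Phi$), which are again analytic in $\bar E$ and stay bounded away from $r_0$, so the outer pieces remain analytic. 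This is bookkeeping rather than a gap, but it should be stated precisely since the matching is exactly where the regular part $G_1$ is assembled.
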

\begin{remark}
The main point is that the lower extremum of the interval~\eqref{dom.max}
for $\bar E$ is included, thus~\eqref{I1finale} describes the actions
until the maximum. Furthermore, by~\eqref{I1finale} the
limit
$$I_{1 \, 0} \coloneqq \lim_{\bar E \rightarrow 0^+}G(\bar E,I_2)=G_1(0,I_2)$$
exists and is finite.
\end{remark}

\begin{remark}
Since $E\mapsto G( E-V_0(I_2),I_2)$ is a monotonically increasing
function for $E\in (V_0(I_2), V_M(I_2))$, there exists a function
$h(I_1,I_2)$ such that
$$G(h(I_1,I_2)-V_0(I_2),I_2)\equiv I_1\ .$$
Furthermore, by the implicit function theorem, $h$ is analytic in $I_1,I_2$ for $I_2\in\widetilde{\cI}$ and $I_1 > I_{1 \, 0}$.
\end{remark}

\noindent
{\it Proof of Theorem~\ref{formaI1}.}  Let $I_2\in\widetilde{\cI}$ and
consider the Hamiltonian~\eqref{H} with $p_{\theta} = I_2$. In the
whole construction $I_2$ will play the role of a parameter, thus we
will omit the dependence on $I_2$ and just consider the $(r,p_r)$
dependence.

Firstly, expanding at $r_0$, we obtain
$$H(r,p_r)=\frac{p_r^2}{2} +V_0- \frac{\lambda^2}{2}(r-r_0)^2 +
\mathcal{O}\left((r-r_0)^3 \right)\ .$$
Secondly, via the change of
variable $r'\coloneqq r-r_0$, (omitting the primes) we get
\begin{equation*}
H(r,p_r)=\frac{p_r^2}{2}+V_0 - \frac{\lambda^2}{2}r^2 + \mathcal{O}(r^3)\ .
\end{equation*}

Fix a value of the energy $E>V_0$, close enough to $V_0$ and denote by
$\gamma(E)$ the level curve corresponding to $H=E$.  Thus we have
\begin{equation*}
I_1=\frac{1}{2\pi} \int_{\gamma(E)} p_r dr=\frac{1}{\pi} \int_{\gamma^+(E)} p_r dr
\end{equation*}
where $\gamma^+(E)$ is the upper part of the level curve, namely, $\gamma|_{p_r>0}$.  

We split the domain of integration into two regions, precisely

\begin{equation}\label{I}
I_1=\frac{1}{\pi} \left[\int_{\gamma^+(E)\,\cap\, \mathcal{U}} p_r dr + \int_{\gamma^+(E)\,\cap\, \mathcal{U}^c} p_r dr\right]\ .
\end{equation}
where $\mathcal{U}$ is a neighborhood of the nondegenerate maximum
that will be fixed in a while. First, we remark that the second
integral does not see the critical point, so it is an analytic
function of $E$ until $V_0$. To analyze the first integral, we exploit
Theorem~\ref{NForm}.

Let us fix a small positive $x_1$ and let us consider the neighborhood
$\cV$ of 0 defined by
$$\R^2\supset\mathcal{V}\coloneqq(-x_1,x_1)\times \left(-\sqrt{\frac{4\bar
    E}{\lambda}+x_1^2},\sqrt{\frac{4\bar
    E}{\lambda}+x_1^2}\right)\ .$$ Provided $\bar E$ and $x_1$ are
small enough, one has $\mathcal{V} \subset \mathcal{V}_0$
(c.f. Theorem~\ref{NForm}). Let us define $\cU \coloneqq \Phi(\mathcal{V})$
and write $\gamma_+(E)\cap\cU$ in the variables $(x,y)$, parametrizing
it with $x\in (-x_1,x_1)$. To this end remark that, since the
Hamiltonian $H$ is a function of $J$ only, namely
$$H= V_0+\lambda J+ \mathcal{G}(J)$$
where $\mathcal{G}(J)=\sum_{i\geq 1}\lambda_iJ^{i+1}$, by the implicit function theorem, there exists an analytic function $\cF(\bar E)$ having a zero of order $2$ at $0$ and such that
$$J= \frac{\bar E + \cF(\bar E)}{\lambda}$$ and, therefore,
$\gamma_+(E)$ can be written in the form $(x,y(x))$ with
\begin{equation*}
y(x)\coloneqq\sqrt{\frac{2(\bar E + \cF(\bar E))}{\lambda}+ x^2}\ .
\end{equation*}
To compute the first integral in~\eqref{I}, we remark that since
$\Phi$ is canonical and analytic in a neighborhood of the origin,
there exists a function $S(x,y)$, analytic in a neighborhood of the
origin, such that
$$p_r d r = y dx + d S\ ,$$
thus, we have
$$\int_{\gamma_+(E)\cap \cU } p_r dr = \int_{-x_1}^{x_1} y dx + S(x_1,
y(x_1)) -S(-x_1,y(-x_1))\ .$$ Since $x_1$ is fixed, the terms involving
$S$ are analytic functions of $\bar E$. Thus, we only compute the
first integral, namely,
$$
\int_{-x_1}^{x_1}y d x = \int_{-x_1}^{x_1} \sqrt{\frac{2(\bar E + \cF(\bar E,I_2))}{\lambda}+ x^2} dx
$$
which takes the form
\begin{equation*}
\begin{aligned}
  \int_{-x_1}^{x_1}y d x&=\frac{2(\bar E + \cF(\bar E))}{\lambda} \ln\left(x_1 + \sqrt{x_1^2 +\frac{2(\bar E + \cF(\bar E))}{\lambda} }\right)\\
  &\qquad+ x_1 \sqrt{x_1^2 + \frac{2(\bar E + \cF(\bar E))}{\lambda} }\\
&\qquad-\frac{\bar E + \cF(\bar E)}{\lambda}  \ln \left(\frac{2(\bar E + \cF(\bar E))}{\lambda} \right) .
\end{aligned}
\end{equation*}
It is easy to see that the first two terms are analytic in a
neighborhood of $0$.

Rewriting the third term as
$$-\frac{\bar E + \cF(\bar E)}{\lambda} \ln \bar E- \ln
\left(\frac{2}{\lambda}+ \frac{2\cF(\bar E)}{\bar E}\right)$$ and
remarking that the second function is analytic in a neighborhood of
$0$, we get the result. Formula~\eqref{I1finale} is obtained by
reinserting the dependence on $I_2$.  \qed

\vskip10pt

We come now to the Arnold determinant. Firstly, we will work in the region
$\bar E >0$ so that the function $G$ is regular and the implicit
function theorem applies and allows to compute $h$ and its
derivatives. Then, we will study the limit $\bar E \rightarrow 0^+$.

By the implicit function theorem, the frequency $\omega_1$ is given by
\begin{equation}\label{W1}
\omega_1 = \parder{h}{I_1}=\left ( \parder {G}{\bar
  E}\right)^{-1}=:\cW_1(\bar E,I_2)\ .
\end{equation}

\begin{remark}\label{formaF}
Let $f$ be an analytic function of the form 
$$f=f(\bar E,I_2)=f(h(I_1,I_2)-V_0(I_2),I_2)\ ,$$
then, 
$$\diff{f}{I_2}\coloneqq\parder{f}{\bar E}\parder{\bar E}{I_2}+\parder{f}{I_2}=\parder{f}{\bar E}\left(\omega_2 -\parder{V_0}{I_2}\right)+\parder{f}{I_2}.$$
\end{remark}

\noindent It follows from Remark~\ref{formaF} and the implicit function theorem that the frequency $\omega_2$ is given by
$$\omega_2=\parder{h}{I_2}=-\parder{G}{I_2}\cW_1+\parder{V_0}{I_2}.$$
Thus, it is worth introducing a function $\cW_2$ defined by
\begin{equation}\label{W2}
\cW_2(\bar E,I_2)\coloneqq-\parder{G}{I_2}\cW_1+\parder{V_0}{I_2}.
\end{equation}

\begin{proposition} 
Let $h\colon \mathcal{A} \rightarrow\mathbb{R} $ be the Hamiltonian in two
degrees of freedom written in action-angle coordinates, then the
Arnold determinant can be rewritten in terms of $G$ and $\cW_1$ as

\begin{equation}\label{newArn}
{\cD}=  -\cW_1\parder{\cW_1}{\bar E}\left(\parder{V_0}{I_2}\right)^2
+
2\cW_1\parder{\cW_1}{I_2}\parder{V_0}{I_2}+\cW_1^3\nparder{2}{G}{I_2}-\cW_1^2
\nparder{2}{V_0}{I_2}
\end{equation}
\end{proposition}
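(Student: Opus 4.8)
The plan is to insert the frequencies and the Hessian of $h$ into the explicit form of the Arnold determinant and to verify that, after substitution, every term mixing $\cW_1$, $G$ and $V_0$ cancels, leaving exactly~\eqref{newArn}. The frequencies have already been identified in~\eqref{W1} and~\eqref{W2}, namely $\omega_1=\cW_1$ and $\omega_2=\cW_2=-\parder{G}{I_2}\cW_1+\parder{V_0}{I_2}$, so the substantive work is to compute the three second derivatives of $h$ as functions of $\bar E$ and $I_2$ through the relation $\bar E=h(I_1,I_2)-V_0(I_2)$.

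First I would record the two elementary facts governing the implicit dependence of $\bar E$ on the actions: differentiating $I_1=G(\bar E,I_2)$ at fixed $I_2$ gives $\parder{\bar E}{I_1}=\cW_1$, while by Remark~\ref{formaF} together with~\eqref{W2} one has $\parder{\bar E}{I_2}=\omega_2-\parder{V_0}{I_2}=-\parder{G}{I_2}\cW_1$. Using these two identities the chain rule yields
\begin{align*}
\nparder{2}{h}{I_1}&=\cW_1\parder{\cW_1}{\bar E}\ ,\\
\frac{\partial^2 h}{\partial I_1\partial I_2}&=\parder{\cW_1}{I_2}-\cW_1\parder{\cW_1}{\bar E}\parder{G}{I_2}\ ,
\end{align*}
the second line being the total $I_2$-derivative of $\omega_1=\cW_1(\bar E,I_2)$ in the sense of Remark~\ref{formaF}.

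The most delicate entry is $\nparder{2}{h}{I_2}=\diff{\omega_2}{I_2}$, since differentiating $\cW_2=-\parder{G}{I_2}\cW_1+\parder{V_0}{I_2}$ requires the mixed derivative $\parder{}{I_2}\parder{G}{\bar E}=-\cW_1^{-2}\parder{\cW_1}{I_2}$ (obtained from $\parder{G}{\bar E}=\cW_1^{-1}$ and equality of mixed partials) as well as the implicit $\bar E$-dependence once more. A careful application of Remark~\ref{formaF} then produces
$$
\nparder{2}{h}{I_2}=\nparder{2}{V_0}{I_2}-\cW_1\nparder{2}{G}{I_2}-2\parder{G}{I_2}\parder{\cW_1}{I_2}+\left(\parder{G}{I_2}\right)^2\cW_1\parder{\cW_1}{\bar E}\ .
$$

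Finally I would substitute these three Hessian entries, together with $\omega_1=\cW_1$ and $\omega_2=-\parder{G}{I_2}\cW_1+\parder{V_0}{I_2}$, into $\Dscr=-\nparder{2}{h}{I_1}\omega_2^2+2\frac{\partial^2 h}{\partial I_1\partial I_2}\omega_1\omega_2-\nparder{2}{h}{I_2}\omega_1^2$ and collect terms. The expected main obstacle is purely the bookkeeping in this last step: every monomial carrying a factor of $\parder{G}{I_2}$ must cancel. Concretely, the coefficients of $\cW_1^3\parder{\cW_1}{\bar E}\left(\parder{G}{I_2}\right)^2$, of $\cW_1^2\parder{\cW_1}{\bar E}\parder{G}{I_2}\parder{V_0}{I_2}$, and of $\cW_1^2\parder{G}{I_2}\parder{\cW_1}{I_2}$ each sum to zero across the three contributions, and what survives is precisely the right-hand side of~\eqref{newArn}. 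I do not anticipate any conceptual difficulty beyond verifying this cancellation and being consistent about partial versus total $I_2$-derivatives throughout.
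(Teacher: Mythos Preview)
Your proposal is correct and follows essentially the same approach as the paper: both compute the three Hessian entries of $h$ via the chain rule through $\bar E=h(I_1,I_2)-V_0(I_2)$, substitute into the explicit Arnold determinant, and simplify. The paper keeps the total $I_2$-derivatives $\diff{\cW_1}{I_2}$ and $\diff{}{I_2}\parder{G}{I_2}$ unexpanded and then says ``after simple computation'' the result follows, whereas you fully expand into pure partials (using in particular $\partial_{I_2}\partial_{\bar E}G=-\cW_1^{-2}\partial_{I_2}\cW_1$) and track the three cancellations explicitly; this is a presentational difference only.
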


\begin{proof}
By exploiting the formul{\ae}~\eqref{W1},~\eqref{W2} and Remark~\ref{formaF}, we compute the second derivatives of the Hamiltonian
$h$. We have

\begin{align*}
\nparder{2}{h}{I_1} &= \parder{\cW_1(\bar
  E,I_2)}{I_1}=\parder{\cW_1}{\bar
  E}\parder{h}{I_1}=\cW_1\parder{\cW_1}{\bar E} 
\\
\frac{\partial^2 h}{\partial I_1 \partial I_2} &= \diff{\cW_1(\bar
  E,I_2)}{I_2}
\\
\nparder{2}{h}{I_2} &=\diff{\cW_2(\bar E,I_2)}{I_2}= \diff{}{I_2}\left( -\cW_1\parder{G}{I_2}+\parder{V_0}{I_2}\right)\\
&=-\diff{\cW_1}{I_2}\parder{G}{I_2}-\cW_1\diff{}{I_2}\left(\parder{G}{I_2}\right)
+ \nparder{2}{V_0}{I_2}
\end{align*}

We can write the three terms of
the Arnold determinant separately as
\begin{equation}\label{D1}
{\cD_1}= -\cW_1 \cW_2^2\parder{\cW_1}{\bar E},
\end{equation}
\begin{equation}\label{D2}
{\cD_2}=2\cW_1\cW_2\diff{\cW_1}{I_2},
\end{equation}
\begin{equation}\label{D3}
{\cD_3}=\cW_1^2\parder{G}{I_2}\diff{\cW_1}{I_2}+\cW_1^3\diff{}{I_2}\left(\parder{G}{I_2}\right)
-\cW_1^2 \nparder{2}{V_0}{I_2}.
\end{equation}
And, gathering together the expressions~\eqref{D1},~\eqref{D2} and~\eqref{D3}, after simple computation, we obtain 
$${\cD} = -\cW_1\parder{\cW_1}{\bar E}\left(\parder{V_0}{I_2}\right)^2
+
2\cW_1\parder{\cW_1}{I_2}\parder{V_0}{I_2}+\cW_1^3\nparder{2}{G}{I_2}-\cW_1^2
\nparder{2}{V_0}{I_2}$$ This concludes the proof.
\end{proof}

\begin{proposition}
The Arnold determinant diverges as $\bar E$ tends to zero.
\end{proposition}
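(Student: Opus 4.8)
The plan is to read the divergence straight off the explicit representation of the action furnished by Theorem~\ref{formaI1} together with the rewriting~\eqref{newArn} of the Arnold determinant. The whole mechanism is the logarithmic singularity at $\bar E=0$ in
$$G(\bar E,I_2)=-\Lambda(\bar E,I_2)\ln\bar E+G_1(\bar E,I_2)\ ,\qquad \Lambda=\frac{\bar E+\mathcal{F}(\bar E,I_2)}{\pi\lambda(I_2)}\ ,$$
where $\Lambda,\mathcal{F},G_1$ are analytic and bounded up to $\bar E=0$ and $\mathcal{F}$ vanishes to second order. Since this formula is \emph{exact}, I would differentiate it directly (never an asymptotic series) to obtain the leading behaviour of $\cW_1=(\partial G/\partial\bar E)^{-1}$ and of its derivatives, and then show that among the four summands of~\eqref{newArn} only the first one survives in the limit, and that it blows up.

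First I would compute, from~\eqref{I1finale},
$$\parder{G}{\bar E}=-\parder{\Lambda}{\bar E}\ln\bar E-\frac{\Lambda}{\bar E}+\parder{G_1}{\bar E}\ .$$
Using $\mathcal{F}=\mathcal{O}(\bar E^2)$ one has $\partial\Lambda/\partial\bar E\to(\pi\lambda)^{-1}$ and $\Lambda/\bar E\to(\pi\lambda)^{-1}$, so the logarithm dominates and $\partial G/\partial\bar E=-(\pi\lambda)^{-1}\ln\bar E\,(1+o(1))$, whence $\cW_1=-\pi\lambda/\ln\bar E\,(1+o(1))\to0^+$. Differentiating once more I would find $\partial^2 G/\partial\bar E^2=-(\pi\lambda\bar E)^{-1}(1+o(1))$; then, from $\partial_{\bar E}\cW_1=-\cW_1^2\,\partial^2_{\bar E}G$, it follows that
$$\parder{\cW_1}{\bar E}\sim\frac{\cW_1^2}{\pi\lambda\bar E}\ ,$$
which is positive and diverges.

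Inserting these into~\eqref{newArn}, the first term is the leading one: with $\cW_1\partial_{\bar E}\cW_1\sim\cW_1^3/(\pi\lambda\bar E)$ and $\cW_1^3\sim-\pi^3\lambda^3/(\ln\bar E)^3$ it behaves like
$$-\cW_1\parder{\cW_1}{\bar E}\left(\parder{V_0}{I_2}\right)^2\sim\frac{\pi^2\lambda^2}{(\ln\bar E)^3\,\bar E}\left(\parder{V_0}{I_2}\right)^2\ ,$$
and since $(\ln\bar E)^3\bar E\to0^-$ this tends to $-\infty$, provided $\partial V_0/\partial I_2\neq0$. The latter holds because~\eqref{cri.1} gives $dV_0/d\ell=1/(2r_0^2)$ with $\ell=I_2^2$, so $\partial V_0/\partial I_2=I_2/r_0^2\neq0$ on a domain where $I_2=p_\theta>0$. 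It then remains to bound the other three terms: since $\partial_{I_2}\Lambda$ and $\partial^2_{I_2}\Lambda$ are $\mathcal{O}(\bar E)$, both $\partial_{I_2}G$ and $\partial^2_{I_2}G$ extend continuously to $\bar E=0$, while $\partial_{I_2}\cW_1=-\cW_1^2\,\partial_{I_2}(\partial_{\bar E}G)=\mathcal{O}(1/\ln\bar E)$. Hence the second, third and fourth terms of~\eqref{newArn} are $\mathcal{O}(1/(\ln\bar E)^2)$, $\mathcal{O}(1/(\ln\bar E)^3)$ and $\mathcal{O}(1/(\ln\bar E)^2)$, all tending to $0$. Therefore $\cD\to-\infty$, which proves divergence.

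The main obstacle I anticipate is purely the bookkeeping: several summands are products of a vanishing factor (a power of $\cW_1$) with a factor that itself diverges, so one must track the exact powers of $\ln\bar E$ and of $\bar E$ to be certain that only the first term persists and genuinely blows up. The delicate point is the coefficient of $\bar E^{-1}$ in $\partial^2_{\bar E}G$: the two contributions $-\partial_{\bar E}\Lambda/\bar E+\Lambda/\bar E^2$ coming from $\partial_{\bar E}(-\Lambda/\bar E)$ cancel to leading order (here the second-order vanishing of $\mathcal{F}$ is essential), so that the surviving $-(\pi\lambda\bar E)^{-1}$ is produced entirely by $\partial_{\bar E}(-\partial_{\bar E}\Lambda\,\ln\bar E)$. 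Controlling all these remainders, as well as the $\mathcal{O}(\bar E\ln\bar E)$ terms arising in the $I_2$-derivatives, rests on the analyticity and boundedness of $\Lambda$, $\mathcal{F}$ and $G_1$ up to $\bar E=0$ guaranteed by Theorem~\ref{formaI1}.
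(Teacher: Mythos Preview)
Your proposal is correct and follows essentially the same route as the paper's own proof: both exploit the explicit formula~\eqref{I1finale} to obtain the asymptotics $\cW_1\sim-\pi\lambda/\ln\bar E$ and $\partial_{\bar E}\cW_1\sim\pi\lambda/(\bar E\ln^2\bar E)$, then check that the last three summands of~\eqref{newArn} vanish while the first diverges like $\pi^2\lambda^2/(\bar E\ln^3\bar E)\,(\partial_{I_2}V_0)^2$. Your write-up is in fact slightly more careful than the paper's in two respects: you justify explicitly that $\partial V_0/\partial I_2\neq0$ via~\eqref{cri.1}, and you track the cancellation in $\partial^2_{\bar E}G$ that produces the $-(\pi\lambda\bar E)^{-1}$ leading term.
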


\begin{proof}
Due to the structure~\eqref{I1finale} of $G$, it is easy to see that
$\nparder{2}{G}{I_2}$ is bounded as $\bar E \rightarrow 0^+$ (remark
that $\parder{G}{I_2}$ means derivative with respect to the second
argument).  Therefore, since $\nparder{2}{V_0}{I_2}$ is a regular
function of $I_2$ and $\cW_1 \rightarrow 0$ as $\bar E$ approaches
zero, we have that
$$\lim_{\bar E \rightarrow 0^+}\left( \cW_1^3\nparder{2}{G}{I_2}-\cW_1^2 \nparder{2}{V_0}{I_2}   \right) =0.$$
Let us now concentrate on the analysis of the remaining terms of~\eqref{newArn}. The asymptotic behavior of the function $\cW_1$ is given by
$$\cW_1 \sim - \frac{\pi \lambda}{\ln \bar E}.$$
Concerning the derivatives, we have
$$\parder{\cW_1}{I_2} = -\left(\parder{G}{\bar
  E}\right)^{-2}\frac{\partial^2 G}{\partial I_2 \partial \bar
  E}\mathop{\to}^{\bar E \rightarrow 0^+  }  0\quad
\Longrightarrow\quad 
\lim_{\bar E \rightarrow 0^+}\left( 
2\cW_1\parder{\cW_1}{I_2}\parder{V_0}{I_2}\right)=0\ .
$$ Concerning the first term, using
$$\parder{\cW_1}{\bar E} \sim \frac{\pi \lambda}{\bar E\ln^2 \bar E} \ ,$$
we have that it behaves as
$$\frac{\pi^2\lambda^2}{\bar E \ln^3 \bar E}
\left(\parder{V_0}{I_2}\right)^2$$ which diverges to infinity as $\bar
E \rightarrow 0^+$. This concludes the proof.
\end{proof}

Lemma~\ref{max} follows from the fact that $\cD$ is a
nontrivial analytic function in $\cA_{i,j}$.

\appendix

\section{A new proof of Bertrand's Theorem}\label{Bert}

We give here a new proof of Bertrand's Theorem. Actually we give here
a local version of the Theorem (in the spirit of \cite{FK04}). The
obtained result is slightly stronger then the usual one.

\begin{theorem}[Bertrand]
Consider the planar central force problem with an analytic potential
$V$ s.t. a stable circular orbit exists. Denote by $\gamma$ the phase
space trajectory of such a circular orbit. If there exists a
neighborhood of $\gamma$ in which all the orbits are
periodic then the potential is either Keplerian or Harmonic.
\end{theorem}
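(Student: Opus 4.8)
The plan is to reduce the hypothesis ``all nearby orbits are periodic'' to the analytic condition that the frequency ratio $\nu=\omega_1/\omega_2$ is \emph{constant} near the circular orbit, and then to invoke Lemma~\ref{min}. First I would recall that, for the planar central motion written in action-angle variables, a phase-space orbit (with $\theta\in\toro$) is periodic if and only if its two frequencies are commensurate, i.e. $\nu\in\mathbb{Q}$. Since $p_\theta$ is chosen so that $\Ve(.;p_\theta^2)$ has a local minimum, the circular orbit $\gamma_{p_\theta}$ corresponds to $I_1=0$; by the convergence of the one-dimensional Birkhoff normal form near a nondegenerate minimum (exploited already in Section~\ref{sec:min}), the frequencies $\omega_1,\omega_2$ and hence $\nu$ extend analytically up to $I_1=0$. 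Moreover $\omega_2\neq0$ on $\gamma_{p_\theta}$ because the particle circulates, so $\omega_2\neq0$ in a whole neighborhood. A neighborhood of $\gamma_{p_\theta}$ in phase space corresponds, in the action variables, to a genuinely two-dimensional open set of the form $\{0<I_1<\delta,\ |I_2-p_\theta|<\delta\}$ on which $\nu$ is analytic.

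The heart of the reduction is then an elementary topological remark: on this connected open set the function $\nu$ is continuous and, by hypothesis, takes values only in the totally disconnected set $\mathbb{Q}$; hence $\nu$ is constant. Now constancy of $\nu$ makes the Burgers equation~\eqref{eq:burger} hold trivially, since both $\partial\nu/\partial I_1$ and $\partial\nu/\partial I_2$ vanish. As $\omega_2\neq0$, the equivalence recorded in Section~\ref{sec:qc-burger} gives $\cD\equiv0$ on the whole open set. (Equivalently, a constant $\nu$ forces $\nabla h$ to point always in the fixed direction $(\nu,1)$, so that $h$ depends only on the single combination $\nu I_1+I_2$, which is exactly the degenerate, non-quasiconvex situation $\cD=0$.) Thus the Arnold determinant vanishes on an open subset of a domain $\cO_{i,j}^{(2)}$ whose lower boundary is a minimum of $\Ve$.

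Finally I would apply Lemma~\ref{min}. Its hypothesis requires the minimum to be nondegenerate; since degenerate minima occur only for the finitely many exceptional values collected in $\cI_{s1}$, I would shrink the open set above to a subinterval of $I_2$-values disjoint from $\cI_{s1}$ (still nonempty and open, as these values are isolated), so that it lies in a domain bounded below by a nondegenerate minimum. Lemma~\ref{min} then yields directly that $V$ is Keplerian or Harmonic, which is the assertion.

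The main obstacle I expect is the \emph{reduction} itself rather than the final deduction. One must justify carefully that a neighborhood of the circular orbit in phase space really produces a two-dimensional open set of actions reaching up to the boundary $I_1=0$, that $\nu$ is \emph{analytic} (not merely continuous) there via Birkhoff convergence, and that $\omega_2$ stays bounded away from zero so that the Burgers reformulation is legitimate. Once $\cD\equiv0$ on an open set has been established, the genuinely hard analytic content---that this forces precisely the Keplerian or Harmonic potential---is already contained in Lemma~\ref{min}, so no further symbolic computation is needed here.
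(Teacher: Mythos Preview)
Your proposal is correct and follows essentially the same route as the paper: both arguments hinge on the observation that periodicity of all nearby orbits forces the frequency ratio $\nu=\omega_1/\omega_2$ to be constant (rational-valued and continuous on a connected set), whence the Burgers quantity $\partial_{I_1}\nu-\nu\,\partial_{I_2}\nu$ vanishes identically, and then invoke the content of Lemma~\ref{min} to conclude that $V$ is Keplerian or Harmonic. The paper phrases this as a contrapositive and is terser about the technical points (nondegeneracy of the minimum, $\omega_2\neq0$, the two-dimensionality of the action image), which you handle more carefully; your care in shrinking away from $\cI_{s1}$ is a genuine improvement in rigor over the paper's sketch.
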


\begin{proof}
First remark that a stable circular orbit is a minimum of the
effective potential corresponding to a particular value of the angular
momentum. By modifying the value of the angular momentum (according to
Lemma 4.1) one reduces to the case where the minimum is
nondegenerate. Consider now the quantity $\parder{\nu}{I_1}-\nu
\parder{\nu}{I_2}$: in Section~\ref{sec:qc-burger} we proved that it
vanishes identically only in the Keplerian and in the Harmonic
cases. Assume by contradiction that this is not the case. Then
$\parder{\nu}{I_1}-\nu \parder{\nu}{I_2}$ is a nontrivial function of
the actions, and thus the ratio $\nu(I_1,I_2)\equiv
\frac{\omega_1}{\omega_2}$ is also a non-constant function. It follows
that there exist $I_1,I_2$ such that $\nu$ is irrational and thus on
the corresponding torus the motion is not periodic, against the
assumption.
\end{proof}

\section{A couple of formul{\ae}}\label{app:conti}
We report here for completeness the explicit formul{\ae} of the
functions introduced in the proof of Lemma~\ref{min}.
$$
\footnotesize
\begin{aligned}
  \nu_1=&\Bigl(36+9 {r_0}^2 g''({r_0})-g({r_0})^2 \left({r_0} g'({r_0})+26\right)
  +g({r_0}) \left(3 {r_0}^2 g''({r_0})+7 {r_0} g'({r_0})+6\right)\\
&\ +30 {r_0}
g'({r_0})-5 {r_0}^2 g'({r_0})^2 -2 g({r_0})^4-14 g({r_0})^3\Bigr)
\Big/
\left( 24 {r_0}^{3/2} (g({r_0})+3)^2 \sqrt{V'({r_0})} \right) \ ,
\end{aligned}
$$

$$
\footnotesize
\begin{aligned}
  \nu_2=&\Bigl(-235 {r_0}^4 g'({r_0})^4+2604 {r_0}^3 g'({r_0})^3+4 g({r_0})^6 \left(17 {r_0}
  g'({r_0})+759\right)\\
  &\ +g({r_0})^5 \left(-84 {r_0}^2 g''({r_0})+600 {r_0} g'({r_0})+13408\right)+54 {r_0}^2
  g'({r_0})^2 \left(25 {r_0}^2 g''({r_0})-142\right)\\
  &\ -27 \left(17 {r_0}^4 g''({r_0})^2-456 {r_0}^2 g''({r_0})-24
  \left({r_0}^4 g^{(4)}({r_0})+12 {r_0}^3 g^{(3)}({r_0})-62\right)\right)\\
  &\ +g({r_0})^4 \left(-48 {r_0}^3
  g^{(3)}({r_0})-1344 {r_0}^2 g''({r_0})+129 {r_0}^2 g'({r_0})^2+624 {r_0} g'({r_0})+33500\right)\\
  &\ -108 {r_0}
  g'({r_0}) \left(14 {r_0}^3 g^{(3)}({r_0})+101 {r_0}^2 g''({r_0})+260\right)\\
  &\ +2 g({r_0})^3 \Bigl(699 {r_0}^2
  g'({r_0})^2+{r_0} g'({r_0}) \left(81 {r_0}^2 g''({r_0})-4732\right)\\
  &\ +12 \left({r_0}^4 g^{(4)}({r_0})-6
  {r_0}^3 g^{(3)}({r_0})-296 {r_0}^2 g''({r_0})+1705\right)\Bigr)\\
  &\ -g({r_0})^2 \Bigl(94 {r_0}^3 g'({r_0})^3-4053
  {r_0}^2 g'({r_0})^2+3 \Bigl(17 {r_0}^4 g''({r_0})^2+4680 {r_0}^2 g''({r_0})\\
  &\ -36 \left(2 {r_0}^4
  g^{(4)}({r_0})+12 {r_0}^3 g^{(3)}({r_0})+11\right)\Bigr)\\
  &\ +12 {r_0} g'({r_0}) \left(14 {r_0}^3 g^{(3)}({r_0})+20
  {r_0}^2 g''({r_0})+3291\right)\Bigr)\\
  &\ +2 g({r_0}) \Bigl(293 {r_0}^3 g'({r_0})^3+9 {r_0}^2 g'({r_0})^2 \left(25
   {r_0}^2 g''({r_0})+28\right)\\
   &\ -9 \left(-36 {r_0}^4 g^{(4)}({r_0})-360 {r_0}^3 g^{(3)}({r_0})+17 {r_0}^4
   g''({r_0})^2+198 {r_0}^2 g''({r_0})+2904\right)\\
   &\ -9 {r_0} g'({r_0}) \left(56 {r_0}^3 g^{(3)}({r_0})+323
   {r_0}^2 g''({r_0})+3216\right)\Bigr)\\
   &\ +20 g({r_0})^8+376 g({r_0})^7
   \Bigr)
\Big/
\Bigl(2304 {r_0}^3 (g({r_0})+3)^{9/2} V'({r_0})\Bigr) \ .
\end{aligned}
$$

\end{document}